\newcommand{\Frechet}{Fr\'echet\xspace}
\newcommand{\dist}[2]{\ensuremath{d\pth{#1,#2}}}
\newcommand{\distFr}[2]{\ensuremath{d_F\pth{#1,#2}}}
\newcommand{\distDTW}[2]{\ensuremath{d_{\text{DTW}}\pth{#1,#2}}}
\newcommand{\distAFrx}[3]{\ensuremath{d_{#3, \text{aF}}\pth{#1,#2}}}
\newcommand{\distAFr}[2]{\ensuremath{d_{w, \text{aF}}\pth{#1,#2}}}
\newcommand{\distADTWx}[3]{\ensuremath{d_{#3, \text{aDTW}}\pth{#1,#2}}}
\newcommand{\distADTW}[2]{\ensuremath{d_{w, \text{aDTW}}\pth{#1,#2}}}
\newcommand{\distSFrx}[3]{\ensuremath{d_{#3, \text{sF}}\pth{#1,#2}}}
\newcommand{\distSFr}[2]{\ensuremath{d_{w, \text{sF}}\pth{#1,#2}}}
\newcommand{\distSDTW}[2]{\ensuremath{d_{w, \text{sDTW}}\pth{#1,#2}}}
\newcommand{\Hash}{\ensuremath{\mathcal {H}}\xspace}
\newcommand{\Prob}[2]{Pr_{#2}\pth{#1}}
\newcommand{\CurveSet}{\ensuremath{\mathcal S}\xspace}
\newcommand{\Curves}[1]{\ensuremath{\Delta^{#1}}\xspace}
\newcommand{\TraversalSet}{\ensuremath{\mathcal T}\xspace}
\newcommand{\subseq}[2]{\ensuremath{\widehat{#1}_{#2}}}
\newcommand{\Partition}[2]{\ensuremath{\PartitionFunc{#1}\pth{#2}}}
\newcommand{\PartitionFunc}[1]{\ensuremath{\Phi^{#1}}}
\newcommand{\Partitions}{\ensuremath{\mathcal P}\xspace}
\providecommand{\eps}{{\varepsilon}}%
\providecommand{\ceil}[1]{\left\lceil {#1} \right\rceil}
\providecommand{\pth}[2][\!]{#1\left({#2}\right)}
\providecommand{\brc}[1]{\left\{ {#1} \right\}}
\newcommand{\pbrc}[1]{\left[ {#1} \right]}
\renewcommand{\Re}{{\rm I\!\hspace{-0.025em} R}}
\newcommand{\Na}{{\rm I\!\hspace{-0.025em} N}}
\DeclareMathOperator*{\argmin}{arg\,min}
\DeclareMathOperator*{\poly}{poly}
\newcommand{\etal}{\textit{e{}t~a{}l.}\xspace}
\newcommand{\seclab}[1]{\label{sec:#1}}
\newcommand{\secref}[1]{Section~\ref{sec:#1}}
\newcommand{\lemlab}[1]{{\label{lem:#1}}}
\newcommand{\lemref}[1]{Lemma~\ref{lem:#1}}
\newcommand{\thmlab}[1]{{\label{theo:#1}}}
\newcommand{\thmref}[1]{Theorem~\ref{theo:#1}}
\newcommand{\corlab}[1]{\label{cor:#1}}
\newcommand{\corref}[1]{Corollary~\ref{cor:#1}}
\newcommand{\applab}[1]{\label{app:#1}}
\newcommand{\appref}[1]{Appendix~\ref{app:#1}}
\newcommand{\BO}[1]{O\pth{#1}}
\newcommand{\BOM}[1]{\Omega\pth{#1}}
\title{Locality-sensitive hashing of curves\footnote{Driemel  has been supported by NWO Veni project ``Clustering time series and trajectories (10019853)''. Silvestri has been supported by the European Research Council project ``Scalable Similarity Search'' (no. 614331) and by MIUR of Italy under project AMANDA.}}
\author[1]{Anne Driemel}
\author[2]{Francesco Silvestri}
\affil[1]{Department of Mathematics and Computer Science, Eindhoven University of Technology, The Netherlands\\
  \texttt{adriemel@tue.nl}}
\affil[2]{
Department of Information Engineering, University of Padova, Italy\\
Theoretical Computer Science, IT University of Copenhagen, Denmark\\
\texttt{silvestri@dei.unipd.it}}
\authorrunning{A. Driemel and F. Silvestri}
\subjclass{F.2.2 Nonnumerical Algorithms and Problems}
\keywords{Locality-Sensitive Hashing, \Frechet distance, Dynamic Time Warping}
\begin{document}
 
\maketitle

\begin{abstract} 
We study data structures for storing a set of polygonal curves in $\Re^d$ such that, given a query curve, we can efficiently retrieve similar curves from the set, where similarity is measured using the discrete \Frechet distance or the dynamic time warping distance.  To this end we devise the first locality-sensitive hashing schemes for these distance measures.  A major challenge is posed by the fact that these distance measures internally optimize the alignment between the curves. We give solutions for different types of alignments including constrained and unconstrained versions.  For unconstrained alignments, we improve over a result by Indyk from 2002~\cite{i-approxnn-02} for short curves. Let $n$ be the number of input curves and let $m$ be the maximum complexity of a curve in the input.  In the particular case where $m \leq  \frac{\alpha}{4d} \log n$, for some fixed $\alpha>0$, our solutions imply an approximate near-neighbor data structure for the discrete \Frechet distance that uses space in $O(n^{1+\alpha}\log n)$ and achieves query time in $O(n^{\alpha}\log^2 n)$ and constant approximation factor.  Furthermore, our solutions provide a trade-off between approximation quality and computational performance: for any parameter $k \in [m]$, we can give a data structure that uses space in $O(2^{2k}m^{k-1} n \log n + nm)$, answers queries in $O( 2^{2k} m^{k}\log n)$ time and achieves approximation factor in $O(m/k)$.

\end{abstract}

\section{Introduction}
\seclab{intro}
We study nearest-neighbor searching for polygonal curves under the discrete
\Frechet distance or the dynamic time warping distance. This problem has
various applications in machine learning, information retrieval and
classification where the recorded instances are curves.
Dynamic time warping has shown to be useful for classification of various types
of data: surgical processes \cite{Forestier2012255}, whale
singing \cite{bmp-ackw-07}, chromosomes \cite{Legrand2008215}, fingerprints
\cite{888711}, electrocardiogram (ECG) frames \cite{1013101}, and vessel
trajectories~\cite{vries2012kernel}.  Originally conceived for
speech recognition, it is now being deployed as universal similarity measure
for time series in the field of data mining. The \Frechet distance is
considered a useful similarity measure for trajectories of moving objects
~\cite{playerpaths2015,gpu-clustering-frechet2015, mdl-frechet-2014, trajectory-corridors2010}.

Indyk and Motwani~\cite{indyk1998approximate, him-12} introduced the idea that
hashing could enable faster nearest-neighbor searching in high-dimensional
Euclidean spaces using a hashing scheme where near points are more likely to
collide than far ones. They showed that such an approach can be used for the
\emph{$(c,r)$-near neighbor problem} which is defined as follows.  Preprocess a
set $S$ of $n$ points into a data structure that answers queries in the
following way: if there exists a point $p \in S$ that lies within distance $r$
from the query point  $q$, then the data structure reports a point $p'\in S$ that
lies within distance $cr$ from $q$. In this paper, we study such locality-sensitive 
hashing schemes for the space of curves.

\subsection{State of the art}

In 2002, Indyk gave a deterministic and approximate near-neighbor data
structure for the discrete \Frechet distance~\cite{i-approxnn-02}. 
This data structure is to date the only result  known for this task and
represents the state of the art. The data structure achieves 
approximation factor $O(\log m + \log\log n)$, where $m$ is
the maximum length of a curve and $n$ is the maximum number of
elements in the data structure. Further, the data structures uses space in 
$O\pth{|X|^{\sqrt{m}} (m^{\sqrt{m}} n)^2}$, where $|X|$ is the size of the
domain on which the curves are defined.  The query time is $O\pth{m^{O(1)}\log n}$.
The data structure precomputes all answers to queries with curves of length
$\sqrt{m}$, leading to a very high space consumption.\footnote{Indyk also
claims (without proof) a slightly different bound using a trade-off parameter
$t \geq 2$: approximation factor $O((\log m + \log\log n)^{(t-1)})$,  space
$O\pth{\pth{m^2|X|}^{tm^{1/t}} n^{2t} }$ and query time $(m+\log n)^{O(t)}$. 
The space bound decreases at the cost of approximation and query time as soon as $t<\log m$; however, the trade-off disappear for larger values of $t$  since all bounds increase in $t$ as soon as $t\geq \log m$.
} 

In the group of $\ell_p$ distances, the \Frechet distance most resembles the 
$\ell_{\infty}$-distance, which is notoriously hard to embed into a
low-dimensional  $\ell_p$-space, see also the discussion by Indyk in~\cite{i-nes-98}.  
Indyk's data structure for the discrete \Frechet distance
is in fact an extension of his data structure for the $\ell_{\infty}$-distance~\cite{i-nes-98}. 
Any subset of $\ell^d_{\infty}$ can be embedded into the \Frechet 
metric\footnote{In particular, one can use $3d$ vertices to express each
$d$-dimensional vector as a curve on a real line.}~\cite{indmat04}.  
This embedding implies that, unless the strong exponential-time hypothesis
fails, there exists no data structure for near-neighbor searching
under the discrete \Frechet distance that achieves preprocessing time in
$O\pth{n^{2-\eps}\poly{m}}$, query time in $O\pth{n^{1-\eps} \poly{m}}$ for
any $\eps>0$, and approximation factor $c<3$.
(see \appref{clb} for details). 
This suggests that the problem becomes hard for long curves, i.e., $m \in \omega\pth{\log n}$.
Recently, Backurs and Sidiropoulos showed how to embed finite subsets of the
Hausdorff distance into $\ell_{\infty}$ using constant distortion and constant
dimension of the host space~\cite{backurs2016constant}. However, for the
\Frechet distance, no non-trivial embeddings are known, see also the discussion
in~\cite{indmat04}.  It is possible to embed any finite metric space into
$\ell_{p}$, for example, using the embedding due to
Bourgain~\cite{mat-embeddings}. However, the high cost of computing the
embedding makes it unfit for use in a nearest-neighbor data structure.  Another
known approach to proximity searching in metric spaces is to exploit a low
doubling-dimension~\cite{Arya2008,gupta2003bounded}.  However, the doubling
dimension of the \Frechet distance is infinite, even if the metric space is
restricted to curves of constant length~\cite{dks-clust-16}.  Recently
Bartal~\etal~\cite{bartal2014impossible} gave lower bounds for embedding
doubling spaces. Their result implies that a metric embedding of the \Frechet
distance into an $\ell_p$-space would have at least super-constant distortion.
However, as noted earlier, it is not even known how to obtain such an embedding.

In general, there is little known in terms of  data structures for the
\Frechet distance.  The authors are aware of the following few results which were
developed for the classic (continuous) \Frechet distance.  De Berg, Cook and
Gudmundsson~\cite{bcg-ffq-13} study range counting queries for the set
of subcurves that lie within distance $r$ to a query line segment. 
Their data structure uses a partition tree to store compressed subcurves. For
any parameter $n\leq s\leq n^2$, the space used by the data structure is in
$O\pth{s \poly\log n}$. The queries are computed in time in
$O\pth{\frac{n}{\sqrt{s}}\poly\log n }$ and uses a constant approximation
factor. However, the data structure does not support more complex query curves
than line segments.  A second data structure is due to Driemel and
Har-Peled~\cite{dh-jay-12}. This data structure answers queries for the
\Frechet distance of a subcurve to a query curve (the subcurve is specified in
the query). If the queries are line segments, an approximation factor of
$(1+\eps)$ can be achieved with logarithmic query time and linear space.
Unlike the $\ell_{\infty}$-metric, which can be evaluated in time that
is linear in the dimension, evaluating a single \Frechet distance is believed
to take time that is at least roughly quadratic in the complexity of the curves
(the number of vertices) in the worst case~\cite{b-seth-14}.  The high time
complexity can be credited to the fact that the distance measure optimizes over
all possible monotone alignments of the two input sequences. Computing the
discrete \Frechet distance, as well as dynamic time warping, can be solved via
dynamic programming.  In both cases, the naive linear scan leads to $O(nm^2)$
query time for finding the nearest neighbor.
For dynamic time warping (DTW) no data structures exist that give provable
guarantees, however there exist many heuristics, see the work of Rakthanmanon
\etal~\cite{rakthanmanon2012searching}~(and references therein).  Since DTW
does not satisfy the triangle inequality, it cannot be embedded into an
$\ell_p$-space.

\subsection{Our results}

Our first result is a basic LSH scheme for the discrete \Frechet distance,
which  leads to a very efficient LSH with approximation factor that is linear
in the number of curve vertices.  The scheme is described in
\secref{basic:all} and it is surprisingly simple: We snap the curves to a
randomly shifted grid and remove consecutive duplicate vertices.
It turns out that this simple scheme alleviates the alignment problem which
sets the \Frechet distance computation apart from the $\ell_{\infty}$-distance.
Next, we show in \secref{const:all} that it is even possible to get constant
approximation, at the cost of a lower collision probability for near curves.
The second scheme randomly perturbs the vertices of the input curves
independently and snaps the vertices to a fixed grid instead of a randomly
shifted grid.  It is natural to ask if there exists an LSH scheme exhibiting
a full-spectrum trade-off between collision probability and approximation.  We
positively answer to this question  in \secref{improved:all}, with a scheme
based on a random partition  of the input curves,  inspired by Indyk's data
structure~\cite{i-approxnn-02}, followed by the application of the basic scheme
to each subsequence independently.\footnote{Indeed, in \thmref{improved:dfd}, the
collision probability for near curves is bounded by $2^{-3M}$ for $K=M$ (using
Stirlings approximation for the binomial coefficient), however when summarizing
our bounds we use the simplified bound from \corref{improved:dfd:bound}.}

All the LSH schemes achieve zero false-positives, meaning that no collisions happen
between far curves. When applied to solve the $(c,r)$-near neighbor
problem, we obtain the results summarized in Table~\ref{results} (see also \secref{lsh}).  It is
interesting to compare our bounds with the state of the art.
The basic scheme of \thmref{basic:dfd} provides a data structure using almost
linear space and $\BO{m \log n}$ query time by allowing a linear approximation
$c=O(m)$. This query time always beats the trivial exact
solution of scanning all input curves for each query, which needs $\BO{nm^2}$
time. In comparison, Indyk's result~\cite{i-approxnn-02} provides a better
approximation when $m=\BOM{\log \log n}$ but it uses exponential space and
slightly higher query time. More generally, when curves are short  $m=o(\log n)$, 
our basic result provides a good alternative to Indyk's result due to the improved space.
In the particular case where $m \leq  \frac{\alpha}{4d} \log n$ for some fixed
$\alpha>0$, we can  answer queries using a constant approximation factor in
$O(n^{\alpha}\log^2 n)$ time and using $O(n^{1+\alpha}\log n)$ space, using
\thmref{const:dfd}.  When curves have constant complexity, the basic
LSH gives the first efficient data structure with constant approximation. 

We then address LSH for the discrete \Frechet distance under alignment constraints in
\secref{constraints}. It is natural to constrain the alignments between 
the vertices of a curve: this preserves important characteristics of the input
curves and it also reduces the actual time to compute the distance between
curves (see e.g.,~\cite{RatanamahatanaK05,Keogh05}).  We target the anchored
and bounded speed constraints that require, respectively, a vertex to be
aligned with at most $w$ vertices or to be aligned with vertices whose indices 
differ by at most $w/2$, for a suitable parameter $w\geq 1$ (for formal
definitions see \secref{prelim:constraints}).  Our scheme provides the first
data structures for the $(c,r)$-near neighbor problem 
with alignment constraints.  
Further, they exhibit a bi-criteria approximation: it is possible to reduce space and query
time with a weaker approximation on the distance but also  on the alignment
parameter $w$.  Bounds are summarized in Table~\ref{results}.

In \secref{ex:dtw}, we study which one of our schemes work for DTW.  We
show that the basic LSH applies to DTW with the same linear approximation,
space and query bounds of the discrete \Frechet distance.  In contrast, the
techniques to improve the approximation factor under the \Frechet distance do
not provide improvements for DTW.
The LSH schemes for constrained distances also yields  linear approximation for
DTW distance, but maintains the trade-off between space/query time and the
approximation on the alignment parameter $w$.

We remark that a data structure for the $(c,r)$-approximate near neighbor problem can be used
as a building block for solving the $c$-approximate nearest neighbor problem. We refer
to~\cite{nn-book} for more details.

\begin{table}[t]
\centering
\caption{Our approximate near-neighbor data structure results for the discrete
\Frechet distance in comparison with the result by Indyk, assuming $d=O(1)$ for simplicity.
The first four rows refer to the standard discrete \Frechet distance $d_F$, while the last two rows $d_{w, \text{aF}}$ and $d_{w, \text{sF}}$ refer to the anchored and speed constraints  respectively.
The input consists of
$n$ polygonal curves in $\Re^d$, each of complexity at most $m$. The
corresponding query results are achieved with high probability. 
The parameters $k\geq 1$ and $\ell\geq 1$ trade-off space/query time and approximation,   and parameter $w$  constrains the possible alignments.
The first entry in 
bi-criteria $ \pth{ \cdot, \cdot}$ denotes the distance approximation, while the second is the alignment approximation.
}
\label{results}
\setlength{\tabcolsep}{1pt}
\begin{tabular}{|l|l|l|l|l|}
\hline
&       Space  & Query time & Approximation & Reference \\
\hline
\multirow{4}{*}{$d_F$} &
$O\pth{|X|^{\sqrt{m}} (m^{\sqrt{m}} n)^2}$ & $O\pth{m^{O(1)}\log n}$ & $O(\log m + \log\log n)$ & \cite{i-approxnn-02} \\
& $O(n \log n + nm)$      &  $O(m \log n)$          &  $O(m)$             & Thm. \ref{theo:basic:dfd} \\
& $O(2^{4md}  n\log n + nm)$      &  $O( 2^{4md} m \log n)$     &  $O(1)$             & Thm. \ref{theo:const:dfd} \\
&  $O(2^{2k} m^{k-1} n \log n +nm) $  &  $O(2^{2k} m^{k} \log n)$   &  $O\pth{{m}/{k}}$   &  Cor. \ref{cor:improved:dfd:bound}         \\
\hline
$d_{w, \text{aF}}$ &  $O\pth{ \pth{\sqrt{2}w}^{2m/\ell} n\log n+nm} $     &    $O\pth{\pth{\sqrt{2}w}^{2m/\ell} m \log n}$      & bi-criteria $ \pth{4d^{\frac{3}{2}} \ell, 2\ell-2}$   & Thm. \ref{theo:anchored}          \\
\hline
$d_{w, \text{sF}}$ &  $O\pth{ \pth{\sqrt{2}w\ell}^{2m/\ell} n\log n + nm} $     &    $O\pth{\pth{\sqrt{2}w\ell}^{2m/\ell}m  \log n}$      & bi-criteria $ \pth{4d^{\frac{3}{2}}\ell, \ell}$   & Thm. \ref{theo:speed}          \\
\hline
\end{tabular}
\end{table}

\section{Preliminaries}
\seclab{prelim}
\subsection{Distance measures for curves}
A \emph{time series} (or \emph{trajectory})\footnote{Usually, these are referred
to as time series when $d=1$ and trajectories when $d>1$.} is a series
$(p_1,t_1),\ldots ,(p_m,t_m)$ of measurements $p_i$ of a signal taken at times
$t_i$. We assume $0=t_1<t_2<\ldots <t_m=1$ and $m$ is finite. A time series may
be viewed as a continuous function $P: [0,1] \rightarrow \Re^d$ by linearly
interpolating $p_1,\dots,p_m$ in order of $t_i$, $i=1,\ldots m$. We obtain a
polygonal curve with \emph{vertices} $p_1=P(t_1),\dots, p_m=P(t_m)$ and
segments between $p_i$ and $p_{i+1}$ called \emph{edges} $\overline{p_i
p_{i+1}}=\{xp_i+(1-x)p_{i+1}|x\in [0,1]\}$. We will simply refer to $P$ as a
\emph{curve}. We denote the space of all curves in $\Re^d$ with $\Curves{d}$. 

We now recall the definitions of discrete \Frechet distance and of the dynamic
time warping distance between two curves.  To this end we define the concept of
traversal.  Given two polygonal curves $P=p_1,\dots,p_{m_1}$ and
$Q=q_1,\dots,q_{m_2}$, a \emph{traversal} 
\[T=\brc{(i_1,j_1), (i_2,j_2),\dots,(i_{\ell},j_{\ell})}\] 
is a sequence of pairs of indices referring to a \emph{pairing} of
vertices from the two curves with the following properties:
\begin{compactenum}[(i)]
\item $i_1=1$, $j_1=1$, $i_{\ell}=m_1$, and $j_{\ell}=m_2$
\item $\forall (i_k,j_k) \in T: (i_{k+1}-i_k) \in \brc{0,1} \wedge (j_{k+1}-j_{k}) \in \brc{0,1} $.
\item $\forall (i_k,j_k) \in T: (i_{k+1}-i_k) + (j_{k+1}-j_{k}) \geq 1 $.
\end{compactenum} 
Intuitively, one can think of the traversal as a prescribed schedule for
simultaneously traversing the two curves, starting at the first vertex of each
curve, in every step the traversal advances by one vertex, either on one of the curves, 
or on both curves simultaneously, finally the traversal has to end at the last
vertices of the two curves. 

We consider the maximum distance of two vertices paired by a traversal as the cost incurred 
by this traversal. Let $\TraversalSet$ be the set of possible traversals for two
curves $P$ and $Q$, then the \Frechet distance corresponds to the minimal cost
of a traversal of the two curves. Likewise, if we define the cost of a traversal as the
sum of distances between paired vertices, then the traversal with minimum cost
corresponds to the dynamic time warping distance. 

\begin{definition}\label{def:fr}
Let $\TraversalSet$  be the set of possible traversals for two
curves $P$ and $Q$. 
The \emph{discrete \Frechet distance $\distFr{P}{Q}$} between curves $P$ and $Q$ is  defined as 
\begin{align*}
\distFr{P}{Q} &= \min_{T \in \TraversalSet} \max_{(i_k,j_k) \in T} \| p_{i_k} - q_{j_k}\|.
\end{align*}
\end{definition}

\begin{definition}\label{def:dtw}
Let $\TraversalSet$ be the set of possible traversals for two
curves $P$ and $Q$.
The \emph{dynamic time warping (DTW) distance $\distDTW{P}{Q}$}  between curves $P$ and $Q$ is defined as 
\begin{align*}
\distDTW{P}{Q} = \min_{T \in \TraversalSet} \sum_{(i_k,j_k) \in T} \| p_{i_k} - q_{j_k}\|.
\end{align*}
\end{definition}
 
The discrete \Frechet distance satisfies the triangle inequality and is a pseudo-metric. 
This is not true for the DTW distance, since it does not satisfy the triangle inequality.

We refer to a traversal realizing the distance of two curves as an
\emph{optimal traversal}.  We can interpret a traversal as the edges of a
bipartite graph where the nodes are the vertices of the two curves and the
edges connect the pairs. 
The following simple lemma holds for all distance measures.  As a consequence,
we  assume in the paper that an optimal traversal consists of disconnected
stars, that we call \emph{components}.  

\begin{lemma}\lemlab{components}
For any two curves $P=p_1,\dots,p_{m_1}$ and  $Q=q_1,\dots,q_{m_2}$, there always
exists an optimal traversal $T$ with the following two properties:
\begin{compactenum}[(i)]
\item $T$ consists of at most $m=\min\{m_1,m_2\}$ disconnected components.
\item Each component is a star, i.e., all edges of this component share a common vertex.
\end{compactenum}
\end{lemma}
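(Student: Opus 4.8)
The plan is to start from an arbitrary optimal traversal and repeatedly ``shortcut'' it until it has the desired shape. Throughout, I view a traversal as a monotone lattice path from $(1,1)$ to $(m_1,m_2)$ whose steps have three types: \emph{horizontal} $(1,0)$, \emph{vertical} $(0,1)$, and \emph{diagonal} $(1,1)$; the visited lattice points are exactly the pairs of $T$, and the bipartite graph $G_T$ has an edge $p_{i_k}q_{j_k}$ for each pair. Two consecutive pairs share a vertex in $G_T$ precisely when the step between them is horizontal or vertical, so the connected components of $G_T$ are exactly the maximal blocks of pairs linked by horizontal or vertical steps, and two consecutive components are separated by a diagonal step.

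The key local move is: if the path contains a \emph{reflex corner}, i.e.\ a horizontal step immediately followed by a vertical step (or vice versa), say $(a,b)\to(a+1,b)\to(a+1,b+1)$, then delete the middle pair $(a+1,b)$ and replace the two steps by the single diagonal step $(a,b)\to(a+1,b+1)$. The result is again a valid traversal (the only new adjacency is a legal diagonal step, and the endpoints $(1,1)$ and $(m_1,m_2)$ are untouched since a reflex corner is interior), and its cost does not increase: for the discrete \Frechet distance deleting a paired vertex can only lower the maximum, and for DTW it lowers the sum by $\|p_{a+1}-q_b\|\ge 0$ --- this is the only place where properties of the distance measure enter, and it applies equally to both. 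Since each application removes one pair and every traversal uses at least $\max\{m_1,m_2\}$ pairs, I would simply take $T$ to be an optimal traversal \emph{with the fewest pairs}; by the move above, such a $T$ has no reflex corner (otherwise shortcutting would yield an optimal traversal with fewer pairs).

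It then remains to read off properties~(i) and~(ii) from the absence of reflex corners. Within a single component of $G_T$ all steps are horizontal or vertical, and if both types occurred there would be two consecutive steps of different types, i.e.\ a reflex corner --- a contradiction. Hence every component consists of pairs linked by steps that are all horizontal (a star $p_i q_j, p_{i+1}q_j,\dots$ centered at one $Q$-vertex), or all vertical (a star centered at one $P$-vertex), or a single pair (a one-edge star); this gives~(ii). For~(i), the first coordinates of the pairs in a component form a contiguous interval of $[m_1]$, consecutive components are separated by a diagonal step so their intervals are consecutive, and together they cover $\{1,\dots,m_1\}$; hence the number of components is at most $m_1$, and symmetrically at most $m_2$, so at most $m=\min\{m_1,m_2\}$.

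The only delicate points I anticipate are bookkeeping ones: checking that the shortcut move never touches the required first or last pair and always yields a legal step sequence, and arguing carefully --- via monotonicity of the two index sequences --- that a maximal horizontal (resp.\ vertical) block really does use only one vertex on the ``center'' side, so that it is a genuine star rather than a longer caterpillar. Neither is conceptually hard; the termination of the rewriting (equivalently, the existence of a shortest optimal traversal) is what makes the ``repeat until no reflex corner'' step rigorous.
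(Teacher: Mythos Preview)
Your proof is correct and takes essentially the same approach as the paper: the paper's local operation of deleting the middle pair from a configuration $(i,j),(i,j+1),(i+1,j+1)$ (or its mirror) is exactly your reflex-corner shortcut, and the paper's charging of components to vertices of the shorter curve is your interval-covering argument. Your lattice-path formulation and the explicit use of a shortest optimal traversal make the termination and the star conclusion more transparent, but the underlying idea is identical.
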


\begin{proof}
The first part is immediate, since we can charge each component to a vertex of
the shorter curve that is contained in it.  To see the second part of the claim, assume
for the sake of contradiction that an optimal traversal has the pairs
$(i,j),(i,j+1)(i+1,j+1)$ for some $i,j$ (or the symmetric configuration
$(i,j),(i+1,j)(i+1,j+1)$). In this case, the middle pair $(i,j+1)$ can be removed without
increasing the cost and without invalidating the traversal properties. We can
apply this reasoning repeatedly until each component is a star.
\end{proof}

\subsection{Distances measures with constraints}
\seclab{prelim:constraints}

\emph{Anchored distances.}
A traversal $T$ is said \emph{$w$-anchored traversal} if each vertex  is paired with a vertex at distance at most $w/2$ (for simplicity we assume $w$ to be even): namely,  $|i-j|\leq w/2$ for each  $(i,j)\in T$. Parameter $w$ is called the \emph{width} of the traversal. 
Such a traversal exists only if $|m_1-m_2|\leq w/2$, otherwise there would be unpaired vertices (e.g., the last vertex of the longest curve).
For two curves $P$ and $Q$ with lengths satisfying $|m_1-m_2|\leq w/2$, we define the \emph{$w$-anchored discrete \Frechet distance $\distAFr{P}{Q}$} and \emph{$w$-anchored DTW distance $\distADTW{P}{Q}$} as in Definitions~\ref{def:fr} and~\ref{def:dtw} where 
$\TraversalSet$ is defined as the set of all possible  $w$-anchored traversals. 

\emph{Speed-constrained distances.}
A traversal $T$ is a \emph{$w$-speed traversal} if each vertex is aligned with at most $w$ vertices of the other curve: in other terms,  the bipartite graph representing the traversal has degree at most $w$.
Parameter $w$ is called the  \emph{speed} of the traversal. 
(We  overload the meaning of $w$ since the width and speed parameters play a similar role in our algorithms.)
Note that a $w$-anchored traversal is a $(w+1)$-speed traversal, but the opposite is not necessary true.
A $w$-speed traversal exists only if $1/w \leq m_1/m_2 \leq w$.
For two curves $P$ and $Q$ with lengths satisfying $1/w \leq m_1/m_2 \leq w$, we defined the \emph{$w$-speed discrete \Frechet distance $\distSFr{P}{Q}$} and \emph{$w$-speed DTW distance $\distSDTW{P}{Q}$} as in Definitions~\ref{def:fr} and~\ref{def:dtw} where  $\TraversalSet$ is defined as the set of all possible  $w$-speed traversals.

\subsection{Locality-sensitive hashing}
\seclab{lsh}

We use the notion of asymmetric locality-sensitive hashing (see, e.g.~\cite{Shrivastava14}), defined as follows:

\begin{definition}  Let $\CurveSet$ be the set of curves in $\Re^d$ and let
$d: {\CurveSet} \times {\CurveSet} \rightarrow \Re^{+}$ be a distance measure 
defined on them.  Given real values $r> 0$, $c>1$, $0\leq \alpha_1\leq 1$ and $0\leq \alpha_2 \leq 1$ with $\alpha_1>\alpha_2$, a family \Hash of pairs of hash functions  $(h_1, h_2)$ is called \emph{$(r,c,\alpha_1,\alpha_2)$-sensitive} 
if for any two curves $P,Q \in \CurveSet$
\begin{compactenum}[(i)]
\item if $d(P,Q) \leq r$, then $\Prob{h_1(P)=h_2(Q)}{(h_1,h_2)\in\Hash} \geq \alpha_1$;
\item if $d(P,Q) > cr$, then $\Prob{h_1(P)=h_2(Q)}{(h_1,h_2)\in\Hash} \leq \alpha_2$.
\end{compactenum}
\end{definition}
When $h_1=h_2$, we have the traditional definition of (symmetric) locality-sensitive hashing.
The above scheme is \emph{asymmetric} in the sense that there are two different
schemes and the guarantees only hold for curves $P$ and $Q$ where $P$ was
hashed using the first scheme and $Q$ was hashed using the second scheme.  This
is useful, e.g., if the application of the LSH is a nearest neighbor data
structure, where comparisons only need to be done between input objects and
query objects.

The results  reported in Table~\ref{results} follow by applying  the standard framework for solving the $(c,r)$-near neighbor problem with an
$(r,c,\alpha_1,\alpha_2)$-sensitive hashing scheme \Hash.
For the
sake of completeness, we sketch this process here.\footnote{
We observe that the
LSH schemes presented in this paper have long hash values (curves or array of
curves). However, they can be shortened with traditional hashing (i.e., by
mapping each value in $[0,O(n)]$), that allows for a more efficient search in
the hash tables generated by the LSH.  This technique increases $\alpha_2$ by
an additive $O(1/n)$ term. }
A new family $\Hash'$ of hashing is constructed by concatenating
$k=\max\{1,\log_{\alpha_2}(1/n)\}$ hash functions from $\Hash$, so that the
collision probability of far points is at most $1/n$.  Then,  each point in $S$
is inserted into $L=(1/\alpha_1)^k$ hash tables, each corresponding to a different
randomly chosen hash function from $\Hash'$.
For a query point $q$, the algorithm searches among all points that collide
with $q$ in the $L$ hash tables and stops as soon as a $cr$-near neighbor is
found.  When $\alpha_2>0$, the data structure requires $\BO{n^{1+\rho}+nm}$
memory words and query time $\BO{\Gamma n^{\rho}}$, where $\Gamma=\BOM{m}$ is
the time required for computing the distance between two curves and $\rho=\log
\alpha_1 / \log \alpha_2$.  When $\alpha_2=0$,  the data structure requires
$\BO{n /\alpha_1}$ memory words and query time $\BO{m/\alpha_1}$. Note that in
this case the query time does not include $\Gamma$: the algorithm does
not need to compute  distances between $q$ and points in the buckets since
there are no false positives.  For a given query, the data structures returns
an approximate $cr$-near neighbor with constant probability.  In order to
obtain high probability (i.e., at least $1-1/n$) we repeat the above process
$\log n$ times, leading to $\log n$ different data structures. This increases
space and query time by a $\BO{\log n}$ term.

\section{Linear approximation factor}
\seclab{basic:all}
We first present the basic LSH scheme in \secref{basic}, and then in
\secref{basic:frechet} we analyze its correctness and performance for the
discrete \Frechet distance. The basic LSH has an approximation factor that is linear
in the number of vertices that a curve can have. 

\subsection{Algorithm}
\seclab{basic}

We use a randomly shifted grid in our hashing scheme.
Let the canonical $d$-dimensional grid of resolution $\delta$ be defined as an 
evenly spaced point set in $\Re^d$, as follows: 
\[ G_{\delta} = \brc{ (x_1,\dots,x_d) \in \Re^d ~|~ \forall~ 1 \leq i \leq d ~\exists~ j \in \Na ~:~  x_i = j \cdot \delta }.\]
Consider a family of such grids parametrized by a shift $t$: 
\[ \widehat{G}^t_{\delta} = \brc{ p +  t ~|~ p \in G_{\delta} }. \]
Choosing $t$ uniformly at random from the half-open  hypercube $[0,\delta)^d$ we
obtain a family of randomly shifted grids. 
Let $P \in \Curves{d}$ be a polygonal curve with vertices $p_1,\dots,p_{m}$ and 
let $h^{t}_{\delta}: \Curves{d} \rightarrow \Curves{d}$ be a hash function. 
The curve $h^t_{\delta}(P)$ is defined as the result of the following two-stage construction. 
\begin{compactenum}[(i)]
\item We snap the curve to the grid $\widehat{G}^{t}_{\delta}$. More precisely, 
we replace each vertex $p_i$ with its closest grid point 
$p_i' = \argmin_{q \in \widehat{G}^t_{\delta}} \|p_i - q\|$ to obtain the curve $P'$. 
\item We remove consecutive duplicates in $P'$. 
That is, we remove the vertex $p_i'$ if it is identical to $p_{i-1}'$. 
\end{compactenum}
Let $\Hash^{\texttt{L}}_{\delta}$ be the family of hash functions $h^t_{\delta}$ constructed
this way.

\subsection{Analysis}
\seclab{basic:frechet}

\begin{lemma}\lemlab{dfd:a1}
Let $P, Q \in \Curves{d}$ be two curves with $m_1$ and $m_2$ points, respectively,
and let $m=\min\{m_1,m_2\}$. 
For any $\delta>0$, it holds that  
\begin{align*}
\Prob{h^t_{\delta}(P) = h^t_{\delta}(Q)}{\Hash^{\texttt{L}}_{\delta}} \geq 1- \pth{2 d m \cdot \frac{\distFr{P}{Q} }{\delta}   }.
\end{align*}
\end{lemma}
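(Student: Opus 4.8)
The plan is to relate the collision event for $h^t_\delta$ to the combinatorial structure of an optimal traversal. Write $r=\distFr{P}{Q}$; one may assume $2dm\,r/\delta<1$, since the bound is otherwise vacuous. By \lemref{components} fix an optimal traversal $T$ whose bipartite graph decomposes into $s\le m$ disjoint star components $C_1,\dots,C_s$. These components partition the vertices of $P$ and of $Q$: the $P$-vertices of each $C_k$ form a nonempty contiguous block $p_{a_k},\dots,p_{b_k}$ and likewise the $Q$-vertices form a block $q_{c_k},\dots,q_{d_k}$, and --- since any non-diagonal step linking the last pair of $C_k$ to the first pair of $C_{k+1}$ would merge the two components --- these blocks appear in the natural order along the two curves. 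Being a star, each $C_k$ has a centre vertex $z_k$ with $\|v-z_k\|\le r$ for every $v\in C_k$.

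The core of the proof is a deterministic claim: if, for every $k$, all vertices of $C_k$ are snapped by the shift $t$ to a single common grid point $g_k$, then $h^t_\delta(P)=h^t_\delta(Q)$. Granting this, the snapped curve $P'$ is $g_1$ repeated $b_1-a_1+1$ times, then $g_2$ repeated $b_2-a_2+1$ times, and so on; as each block is nonempty, deleting consecutive duplicates turns $P'$ into exactly the sequence obtained from $g_1,\dots,g_s$ by deleting consecutive duplicates. The identical computation for $Q'$ yields the same sequence, so the two hash values agree. Hence $\Pr\big[h^t_\delta(P)=h^t_\delta(Q)\big]\ge\Pr[E]$, where $E$ is the event that every component is ``monochromatic'' under the snapping.

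To bound $\Pr[\overline E]$ I would union bound \emph{over components}, which is the only step where $m=\min\{m_1,m_2\}$ rather than $m_1+m_2$ enters. Fix a component $C_k$ and a coordinate $i$. All vertices of $C_k$ have their $i$-th coordinate in the length-$2r$ interval $[z_{k,i}-r,\,z_{k,i}+r]$. Two reals get snapped to grid points with different $i$-th coordinates only if some rounding boundary $t_i+(j+\tfrac12)\delta$ of the shifted grid separates them, and since $t_i$ is uniform on $[0,\delta)$ the probability that a boundary lands in a fixed interval of length $2r$ is at most $2r/\delta$. Summing over the $d$ coordinates gives $\Pr[C_k\text{ not monochromatic}]\le 2dr/\delta$, and summing over the $s\le m$ components gives $\Pr[\overline E]\le 2dm\,r/\delta$, which combined with the previous paragraph is the claim.

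I expect the main obstacle to be making the union bound cost only $2dm$: a naive bound over the (up to $m_1+m_2-1$) vertex pairs of $T$ loses a factor roughly $\max\{m_1,m_2\}/m$, and the fix is precisely the observation that inside one component the bad region per coordinate is a single interval of width at most $2r$, independent of how many vertices the component contains, so one pays only for the number of components and the dimension. A secondary point needing care is the structural claim underlying the deterministic step --- that after the reduction of \lemref{components} the components cut both curves into contiguous, identically ordered blocks --- since this is what guarantees that duplicate removal produces literally the same vertex sequence for $P$ and for $Q$.
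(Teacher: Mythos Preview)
Your proof is correct and follows essentially the same approach as the paper's: decompose an optimal traversal into at most $m$ star components via \lemref{components}, then bound the probability that some component fails to snap to a single grid point by a union bound over components and coordinates. You are more explicit than the paper in two places---the deterministic claim that monochromatic components force $h^t_\delta(P)=h^t_\delta(Q)$ (via the contiguous, identically ordered block structure), and the observation that the bad set per coordinate is a single interval of length $2r$ regardless of how many vertices the component contains---but these are exactly the implicit steps the paper's argument relies on.
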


\begin{proof} 
We bound the probability that $P$ and $Q$ do not hash to the same sequence.
To this end, consider an optimal traversal $T$ of $P$ and $Q$ with respect to
the discrete \Frechet distance. By \lemref{components}, we can assume that 
$|T|\leq m$ and each component is a star. Let $\ell$ denote the number of components of $T$.
For $1 \leq k \leq \ell$ denote with $E_k$ the event that not all vertices of the
$k$-th component are snapped to the same grid point.  This happens only if at
least one pair of vertices is separated in at least one dimension by the
random shift $t$. 

Since the component is a star, there exists a vertex $v$ of either $P$ or $Q$, such that
$v$ is involved in all pairs of $T$ in the $k$-th component. Therefore, all vertices
in this component have distance at most $\distFr{P}{Q}$ to $v$.
Since  $t$ is uniformly distributed in $[0,\delta)^d$, the probability that any pair is
separated along any fixed dimension is $2 \distFr{P}{Q}/\delta$.  
As a consequence, event $E_k$ happens with probability at most $2d\cdot \distFr{P}{Q}/\delta$.

By a union bound over the $\ell$ components in $T$, we have
that the probability of $P$ and $Q$ not being hashed to the same sequence is
bounded by 
\begin{align*}
  \Prob{\bigcup_{1\leq k \leq \ell}  E_k}{} \leq \sum_{1\leq k \leq \ell} \Prob{E_k}{} = 2dm\cdot\frac{\distFr{P}{Q}}{\delta} 
\end{align*}
 and the lemma follows.
\end{proof}

\begin{lemma}\lemlab{dfd:a2}
For any value of $\delta$ and for any $P,Q \in \Curves{d}$, if there exists a value of $t \in [0,\delta)^d$ such that 
$h^t_{\delta}(P)=h^t_{\delta}(Q)$, then it holds that $\distFr{P}{Q} \leq {\sqrt{d}}\cdot \delta.$
\end{lemma}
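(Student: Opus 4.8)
The plan is to exploit that $\distFrConst$ is a pseudo-metric (as noted right after \defref{dtw}) and to route the distance between $P$ and $Q$ through their common hash image, controlling separately the error of snapping to the grid and the cost of deleting consecutive duplicates --- the latter being zero.

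First I would analyse the snapping step. Write $P'$ for the curve obtained from $P$ by replacing each vertex $p_i$ with its nearest point $p_i'$ of the shifted grid $\widehat{G}^t_{\delta}$, and similarly $Q'$ for $Q$. Since the grid has spacing $\delta$ along each of the $d$ coordinates, the nearest grid point to any point of $\Re^d$ lies within $\delta/2$ per coordinate, hence $\|p_i - p_i'\| \le \frac{\sqrt d}{2}\,\delta$ for every $i$. Because $P$ and $P'$ have the same number of vertices $m_1$, the diagonal pairing $(1,1),(2,2),\dots,(m_1,m_1)$ is a legal traversal, and it witnesses $\distFr{P}{P'} \le \frac{\sqrt d}{2}\,\delta$; in the same way $\distFr{Q}{Q'} \le \frac{\sqrt d}{2}\,\delta$.

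Next I would show that removing consecutive duplicates costs nothing: if $P''$ denotes the curve obtained from $P'$ by deleting every vertex equal to its predecessor, then $\distFr{P'}{P''}=0$. Indeed, let $f$ send the index of a vertex of $P'$ to the index of the vertex of $P''$ into which it collapses; then $f$ is non-decreasing, $f(1)=1$, $f$ attains the last index of $P''$, and $f(i+1)-f(i)\in\{0,1\}$, so the pairing $(i,f(i))$, $1\le i\le m_1$, is a valid traversal, and its cost is $0$ because $p''_{f(i)}=p'_i$ by construction. Hence $\distFr{P}{h^t_{\delta}(P)} \le \frac{\sqrt d}{2}\,\delta$ and, symmetrically, $\distFr{Q}{h^t_{\delta}(Q)} \le \frac{\sqrt d}{2}\,\delta$.

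Finally, assuming $h^t_{\delta}(P)=h^t_{\delta}(Q)=:R$, the triangle inequality for $\distFrConst$ gives
\[
\distFr{P}{Q}\;\le\;\distFr{P}{R}+\distFr{R}{Q}\;\le\;\frac{\sqrt d}{2}\,\delta+\frac{\sqrt d}{2}\,\delta\;=\;\sqrt d\,\delta,
\]
which is the claim. I do not anticipate a genuine obstacle here; the only point requiring a moment of care is the zero-cost claim for duplicate removal, which the explicit traversal above settles.
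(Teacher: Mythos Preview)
Your argument is correct and follows essentially the same route as the paper: bound $\distFr{P}{h^t_{\delta}(P)}$ and $\distFr{Q}{h^t_{\delta}(Q)}$ by $\tfrac{\sqrt d}{2}\,\delta$ each, then apply the triangle inequality for the discrete \Frechet distance. The paper's proof asserts the bound $\distFr{P}{h^t_{\delta}(P)}\le \tfrac{\sqrt d}{2}\,\delta$ directly from the per-vertex displacement without separately addressing the duplicate-removal step; your explicit zero-cost traversal for that step is a small extra detail that makes the argument slightly more rigorous but does not change the approach.
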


\begin{proof}
In the case that $h^t_{\delta}(P)=h^t_{\delta}(Q)$, it holds that $\distFr{h^t_{\delta}(P)}{h^t_{\delta}(Q)}= 0$.
Snapping a curve to the randomly shifted grid changes the position of each
vertex by at most $\frac{\sqrt{d}}{2}\cdot\delta$. Therefore, it holds that 
$\distFr{P}{h^t_{\delta}(P)} \leq  \frac{\sqrt{d}}{2}\cdot\delta$ and similarily 
$\distFr{Q}{h^t_{\delta}(Q)} \leq \frac{\sqrt{d}}{2}\cdot\delta$. By the triangle inequality,
\begin{align*}
\distFr{P}{Q} \leq \distFr{h^t_{\delta}(P)}{P} + \distFr{h^t_{\delta}(P)}{h^t_{\delta}(Q)} + \distFr{h^t_{\delta}(Q)}{Q} \leq \sqrt{d}\cdot{\delta}.
\end{align*}
\end{proof}

The next theorem  follows by plugging in the bounds of Lemmas~\ref{lem:dfd:a1} and~\ref{lem:dfd:a2}.
\begin{theorem}\thmlab{basic:dfd}
Let $P, Q \in \Curves{d}$ be two curves with $m_1$ and $m_2$ points, respectively, and let $m=\min\{m_1,m_2\}$, $\delta= 4dmr$ and  $c=4d^{\frac{3}{2}}m$.
It holds that: 
\begin{compactenum}[(i)]
\item if $\distFr{P}{Q} < r$, then $\Prob{h^t_{\delta}(P) = h^t_{\delta}(Q)}{\Hash^{\texttt{L}}_{\delta}} > \frac{1}{2}$;
\item if $\distFr{P}{Q} > cr$, then $\Prob{h^t_{\delta}(P) = h^t_{\delta}(Q)}{\Hash^{\texttt{L}}_{\delta}}= 0.$
\end{compactenum}
\end{theorem}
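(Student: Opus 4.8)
The plan is to instantiate the two lemmas with the specific parameter choices. For part (i), we set $\delta = 4dmr$ and invoke \lemref{dfd:a1}: if $\distFr{P}{Q} < r$, then the collision probability is at least $1 - 2dm\cdot\frac{\distFr{P}{Q}}{\delta} > 1 - 2dm\cdot\frac{r}{4dmr} = 1 - \frac{1}{2} = \frac{1}{2}$, which gives the strict inequality claimed. This is purely arithmetic once the right $\delta$ is plugged in.

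For part (ii), I would argue by contraposition using \lemref{dfd:a2}. Suppose, for contradiction, that there exists a shift $t$ with $h^t_{\delta}(P) = h^t_{\delta}(Q)$. Then \lemref{dfd:a2} yields $\distFr{P}{Q} \leq \sqrt{d}\cdot\delta = \sqrt{d}\cdot 4dmr = 4d^{3/2}mr = cr$. But this contradicts the hypothesis $\distFr{P}{Q} > cr$. Hence no such $t$ exists, and the collision probability over the random choice of $t$ is exactly $0$.

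There is essentially no obstacle here: the theorem is a direct corollary of the two preceding lemmas, and the only thing to verify is that the chosen values $\delta = 4dmr$ and $c = 4d^{3/2}m$ are mutually consistent — namely, that $\sqrt{d}\cdot\delta = cr$ holds, which it does by construction. One minor point of care is that \lemref{dfd:a1} bounds the collision probability in terms of $\distFr{P}{Q}$ rather than $r$, so in part (i) I should note that the bound $1 - 2dm\cdot\frac{\distFr{P}{Q}}{\delta}$ is decreasing in $\distFr{P}{Q}$, so the inequality $\distFr{P}{Q} < r$ feeds through to the strict bound $> \frac12$. Similarly, \lemref{dfd:a2}'s guarantee is stated for the existence of \emph{any} $t$, which is exactly what lets us conclude the probability is zero (not merely small) for far curves, giving the zero-false-positive property.
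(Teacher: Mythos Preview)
Your proposal is correct and matches the paper's own proof essentially line for line: both parts are obtained by plugging $\delta = 4dmr$ into \lemref{dfd:a1} and \lemref{dfd:a2} respectively, and verifying the arithmetic identity $\sqrt{d}\cdot\delta = cr$. The only cosmetic difference is that the paper phrases part (ii) as a direct implication rather than a contraposition, but the content is identical.
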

\begin{proof}
The first claim follows by plugging in the bounds of \lemref{dfd:a1}:
\begin{align*}
 \Prob{h^t_{\delta}(P) = h^t_{\delta}(Q)}{} > 1- \pth{2 d m \cdot \frac{\distFr{P}{Q} }{\delta}} > 1- \frac{\distFr{P}{Q}}{2r} > \frac{1}{2}.
\end{align*}
On the other hand, the second claim follows from \lemref{dfd:a2}:
\begin{align*} 
\distFr{P}{Q} > c\cdot r = 4 d^{3/2} m r = \sqrt{d}\cdot {\delta} \quad\Rightarrow\quad h^t_{\delta}(P) \neq h^t_{\delta}(Q).
\end{align*} 
\end{proof}

\section{Constant approximation factor}
\seclab{constant}
\seclab{const:all}
In the previous section we analyzed a very efficient LSH with linear
approximation factor. On the other end of the spectrum, we can also design an
LSH with constant approximation factor, but higher running time. Conceptually,
the easiest way to do this is to randomly and independently perturb the vertices of each curve and snap them to a fixed grid.

\subsection{Algorithm}
\seclab{const}
The described scheme is asymmetric. We assume that we have two types of curves,
which we call input curves and query curves.   
Consider an input curve $P=p_1, \dots,p_m$, and let $G_{\delta}$ be the canonical $d$-dimensional grid of resolution $\delta$ defined in the previous section.
Let $t_P=t_1,\dots,t_m$ be a sequence of independent random
variables which are uniformly  distributed in
$\pbrc{-\frac{\delta}{2},\frac{\delta}{2}}^d$.  
We perturb the vertices of $P$:
Let $P'=p'_1,\dots,p'_m$ be the perturbed curve with $p'_i=p_i+t_i$.
We snap the curve $P'$ to the grid ${G}_{\delta}$. More precisely, 
we replace each vertex $p'_i$ with its closest grid point 
$p''_i = \argmin_{q \in {G}_{\delta}} \|p'_i - q\|$ to obtain the curve $P''$. 
In the next step we remove consecutive duplicates in $P''$. 
That is, we remove the vertex $p''_i$ if it is identical to $p''_{i-1}$.  
We define  $h^{t_P}_{\delta}(P)$ to be the result of this algorithm.

For a query curve $Q$, the hash function is the same. However, a different random sequence $t_Q$ is used for randomly perturbing the curve.
We let
$\Hash^{\texttt{C}}_{\delta}$ denote the LSH scheme defined this way: namely, $\Hash^{\texttt{C}}_{\delta}$ contains all  pairs $(h^{t_P}_{\delta}, h^{t_Q}_{\delta})$, where vectors $t_P$ and $t_Q$ consist of entries independent and identically distributed in $\pbrc{-\frac{\delta}{2},\frac{\delta}{2}}^d$.

\subsection{Analysis}

\begin{lemma}\lemlab{dfdb:a1}
Let $P, Q \in \Curves{d}$ be two curves with $m_1$ and $m_2$ points, respectively. Let 
$m=\min\{m_1,m_2\}$ and let $M=\max\{m_1,m_2\}$. For any $\delta>0$, 
it holds that
\begin{align*}
\Prob{h^{t_P}_{\delta}(P)=h^{t_Q}_{\delta}(Q)}{\Hash^{\texttt{C}}_{\delta}}
\geq \pth{\frac{1}{2}}^{dm}\cdot \pth{\frac{1}{2}-\frac{\distFr{P}{Q}}{\delta}}^{dM}
\end{align*}
In particular, if $\delta > 4\distFr{P}{Q}$, then the probability is strictly lower bounded by $2^{-2d(m_1+m_2)}$.
\end{lemma}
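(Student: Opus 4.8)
\emph{Setup.} Write $r=\distFr{P}{Q}$ and fix an optimal traversal $T$; by \lemref{components} we may take $T$ to be a disjoint union of star-shaped components $C_1,\dots,C_\ell$ with $\ell\le m$, each $C_k$ having a centre $v_k$ such that every vertex paired in $C_k$ lies within distance $r$ of $v_k$. The plan is to isolate a simple event that forces a hash collision and to lower-bound its probability using the independence of the per-vertex perturbations and of the $d$ coordinates.

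\emph{A sufficient collision event.} I would first show that $h^{t_P}_\delta(P)=h^{t_Q}_\delta(Q)$ holds whenever, for every component $C_k$, all of its vertices (from $P$ and from $Q$) have their randomly shifted copies snapped to one common cell of $G_\delta$ — equivalently, whenever for every pair $(i,j)\in T$ the shifted copies of $p_i$ and $q_j$ snap to the same cell. To prove this is sufficient, walk along $T$ and record for each pair this common snapped cell; since $T$ is monotone, the resulting length-$|T|$ sequence, read off the $P$-coordinates, is the snapped-vertex sequence of $P$ with each entry repeated one or more times, and read off the $Q$-coordinates it is the snapped-vertex sequence of $Q$ with each entry repeated one or more times. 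Hence after deleting consecutive duplicates both become the same sequence, so the hashes agree.

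\emph{Lower bound on its probability.} Different components use disjoint vertex sets, so their events are independent and the probability factorises over $C_1,\dots,C_\ell$; likewise the $d$ coordinates factorise. I would then bound one coordinate of one component: call the centre coordinate $x$ and the leaf coordinates $y_1,\dots,y_c$, so $|y_i-x|\le r$; each is shifted by an independent uniform variable on $[-\tfrac\delta2,\tfrac\delta2]$ and snapped to $\delta\mathbb Z$. Let $g$ be the grid point closest to $x$. As $|x-g|\le\tfrac\delta2$ and the shifted value of $x$ is uniform on an interval of length $\delta$, it snaps to $g$ with probability at least $\tfrac12$; conditioning on that, each leaf $y_i$ is within $r+\tfrac\delta2$ of $g$, so it snaps to $g$ with probability at least $\tfrac12-\tfrac r\delta$, independently over $i$. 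So one coordinate of a $c$-leaf component collides with probability at least $\tfrac12\bigl(\tfrac12-\tfrac r\delta\bigr)^{c}$, and the component with probability at least $\bigl(\tfrac12\bigr)^{d}\bigl(\tfrac12-\tfrac r\delta\bigr)^{dc}$. Multiplying over the $\ell$ components — one centre each, and every one of the $m_1+m_2$ vertices in exactly one component — gives collision probability at least $\bigl(\tfrac12\bigr)^{d\ell}\bigl(\tfrac12-\tfrac r\delta\bigr)^{d(m_1+m_2-\ell)}$, from which the stated bound $\bigl(\tfrac12\bigr)^{dm}\bigl(\tfrac12-\tfrac r\delta\bigr)^{dM}$ is to be deduced. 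For the closing "in particular": when $\delta>4r$ both $\tfrac12$ and $\tfrac12-\tfrac r\delta$ exceed $\tfrac14$, and since the bound is a product of exactly $d(m_1+m_2)$ such factors it strictly exceeds $2^{-2d(m_1+m_2)}$.

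\emph{Where the difficulty lies.} The sufficient-event step and the per-coordinate estimate are routine. The delicate point is the last reduction, matching the exponents $(\ell,\,m_1+m_2-\ell)$ to $(m,\,M)$. Because $\ell\le m$ the first factor is already fine, but promoting $m-\ell$ further vertices to the "cheap" rate $\tfrac12$ is exactly where the star structure bites: a component can be centred at a vertex of the \emph{longer} curve that is paired with several vertices of the \emph{shorter} curve, and the estimate above then charges those shorter-curve vertices at the pessimistic rate $\tfrac12-\tfrac r\delta$. Arguing that every vertex of the shorter curve may nonetheless be treated as "cheap" is the step I would spend the most effort on. (The weaker bound with exponents $(\ell,\,m_1+m_2-\ell)$ already suffices for the "in particular" statement and for every later use of the lemma.)
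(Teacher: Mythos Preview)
Your approach is essentially identical to the paper's: fix an optimal star-decomposed traversal, argue coordinatewise that each component collapses to a single grid cell with probability at least $\tfrac{1}{2}\bigl(\tfrac{1}{2}-\tfrac{r}{\delta}\bigr)^{c_k}$, and multiply over components and dimensions to obtain
\[
\Pr[\text{collision}] \;\ge\; \Bigl(\tfrac{1}{2}\Bigr)^{d\ell}\Bigl(\tfrac{1}{2}-\tfrac{r}{\delta}\Bigr)^{d(m_1+m_2-\ell)}.
\]
You then flag the passage from exponents $(\ell,\,m_1+m_2-\ell)$ to $(m,\,M)$ as the delicate step. You are right to do so, and in fact you have put your finger on a point the paper's proof does not address either: there the authors simply write ``the lemma is now implied for $d=1$ observing that $\ell\le\min\{m_1,m_2\}$'' and move on. But with $a=\tfrac12$ and $b=\tfrac12-\tfrac{r}{\delta}\le a$, one has $a^{\ell}b^{\,m+M-\ell}\ge a^{m}b^{M}$ if and only if $\ell\ge m$, so the observation $\ell\le m$ gives the inequality in the \emph{wrong} direction whenever $\ell<m$ and $r>0$. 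And $\ell<m$ does occur: take $P=(0,\,10,\,10+\eps)$, $Q=(\eps,\,0,\,10)$; the unique optimal traversal (cost $\eps$) has star decomposition with centres $p_1$ and $q_3$, so $\ell=2<3=m=M$. Your instinct that ``promoting $m-\ell$ further vertices to the cheap rate'' is where the argument breaks is exactly correct; neither your write-up nor the paper's supplies the missing idea.

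Your parenthetical remark is the right way to resolve this for the purposes of the paper: the weaker bound with exponents $(\ell,\,m_1+m_2-\ell)$ is what the argument actually proves, it already has exactly $d(m_1+m_2)$ factors, and so it directly yields the ``in particular'' conclusion $>2^{-2d(m_1+m_2)}$ once $\delta>4r$. Since \thmref{const:dfd} only invokes that numerical consequence, nothing downstream is affected.
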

\begin{proof}
Note that for $\distFr{P}{Q} \geq \frac{\delta}{2}$ the claim is trivially true. 
Therefore, assume that $\distFr{P}{Q} < \frac{\delta}{2}$.
For simplicity assume first that $d=1$.
We bound the probability that $P$ and $Q$ do not hash to the same sequence.
To this end, consider an optimal traversal $T$ of $P$ and $Q$ with respect to
the discrete \Frechet distance. By \lemref{components}, we can assume that 
$|T|\leq m_1+m_2$ and each component is a star. Let $\ell$ denote the number of components of $T$.
For $1 \leq k \leq \ell$ denote with $E_k$ the event that not all vertices of the
$k$-th component are snapped to the same grid point. 
Assume that the center of the $k$-th star is a vertex $p_i$ of $P$ and that the
other vertices of the component are vertices $q_j,\dots,q_{j+c_k}$ of $Q$. The
analysis for the case where the center is a vertex of $Q$ is analogous.
There must be a grid point in either one of the two intervals to the left and
to the right of $p_i$: $I_{l}=[p_i-\frac{\delta}{2}, p_i)$ and
$I_{r}=[p_i, p_i+\frac{\delta}{2})$.
We analyze the case that there is a grid point in $I_{r}$, the other case is analogous.
Let $X_i$ be the event that $p_i' \in I_{r}$. Since $t_P$ is uniformly random in 
$\pbrc{-\frac{\delta}{2},\frac{\delta}{2}}^{m_1}$, it holds that $\Prob{X_i}{} \geq \frac{1}{2}$.
Now, let $Y_j$ be the event that $q'_j \in I_r$. If $q_j$ was in $p_i$'s
component, then there are two cases. Either $q_j$ lies in $I_l$ or in $I_r$.
In the first case, we have 
\[\Prob{Y_i}{} \geq \frac{\frac{\delta}{2} - |p_i-q_j|}{\delta} \geq \frac{1}{2} - \frac{\dist{P}{Q}}{\delta},\]
and in  the second case we have $\Prob{Y_i}{} \geq \frac{1}{2}$.
We can bound the probability that all vertices in the $k$-th component snap to the same grid point 
\[\Prob{\overline{E_k}}{} \geq \Prob{X_i \cap Y_j \cap \dots \cap Y_{j+c_k}}{} 
\geq \frac{1}{2}\cdot \pth{\frac{1}{2}- \frac{\dist{P}{Q}}{\delta}}^{c_k}\]

If all components are preserved, then the two curves will hash to the same sequence, therefore
\begin{eqnarray*}
\Prob{h^{t_P}_{\delta}(P) = h^{t_Q}_{\delta}(Q)}{}
&\geq& \Prob{ \bigcap_{1\leq k \leq \ell} \overline{E_k}}{} 
\geq \prod_{1 \leq k \leq \ell} \Prob{\overline{E_k}}{} \\
&\geq& \prod_{1 \leq k \leq \ell} \frac{1}{2} \pth{\frac{1}{2}- \frac{\dist{P}{Q}}{\delta}}^{c_k} 
\geq \pth{\frac{1}{2}}^\ell  \pth{\frac{1}{2}- \frac{\dist{P}{Q}}{\delta}}^{m_1+m_2-\ell}. 
\end{eqnarray*} 
The last inequality follows since $\pth{\sum_{1\leq k\leq \ell}
c_k}=m_1+m_2-\ell$. Indeed, each center of a component can be charged to this component and 
the remaining vertices make up the sum of the leaves of all components. The lemma is now implied for $d=1$ observing that
$\ell\leq \min\{m_1,m_2\}$, as implied by \lemref{components}. 
We get the lemma for general $d$ by observing that the dimensions are independent.
\end{proof}

The following result then holds.
\begin{theorem}\thmlab{const:dfd}
Let $P, Q \in \Curves{d}$ be two curves with $m_1$ and $m_2$ points, respectively, and let  $\delta= 4dr$ and  $c=4d^{3/2}$.
It holds that
\begin{compactenum}[(i)]
\item if $\distFr{P}{Q} < r$, then $\Prob{h^{t_P}_{\delta}(P)=h^{t_Q}_{\delta}(Q)}{\Hash^{\texttt{C}}_{\delta}} > \pth{\frac{1}{2}}^{2d(m_1+m_2)}$;
\item if $\distFr{P}{Q} > cr$, then 
$\Prob{h^{t_P}_{\delta}(P)=h^{t_Q}_{\delta}(Q)}{\Hash^{\texttt{C}}_{\delta}} = 0.$
\end{compactenum}
\end{theorem}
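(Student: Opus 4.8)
The plan is to prove \thmref{const:dfd} by directly instantiating \lemref{dfdb:a1} (for part (i)) and a suitable analogue of \lemref{dfd:a2} (for part (ii)), exactly as was done for the basic scheme in \thmref{basic:dfd}. The parameters are chosen so that $\delta = 4dr$ plays the role that makes the near-curve bound clean, and $c = 4d^{3/2}$ makes the far-curve collision probability vanish.

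For part (i), suppose $\distFr{P}{Q} < r$. Then with $\delta = 4dr$ we have $\delta = 4dr > 4d\distFr{P}{Q} \geq 4\distFr{P}{Q}$, so the ``in particular'' clause of \lemref{dfdb:a1} applies and gives
\begin{align*}
\Prob{h^{t_P}_{\delta}(P)=h^{t_Q}_{\delta}(Q)}{\Hash^{\texttt{C}}_{\delta}} > \pth{\frac{1}{2}}^{2d(m_1+m_2)},
\end{align*}
which is precisely the claimed bound. So part (i) is essentially immediate once the parameter $\delta=4dr$ is seen to satisfy the hypothesis $\delta>4\distFr{P}{Q}$.

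For part (ii), I would argue the contrapositive: if a collision is possible, i.e.\ there exist shift vectors $t_P, t_Q$ with $h^{t_P}_{\delta}(P)=h^{t_Q}_{\delta}(Q)$, then $\distFr{P}{Q} \leq \sqrt{d}\cdot\delta$, which with $\delta = 4dr$ gives $\distFr{P}{Q} \leq 4d^{3/2} r = cr$, contradicting $\distFr{P}{Q} > cr$. The deterministic bound follows the same template as \lemref{dfd:a2}: perturbing a vertex by $t_i \in \pbrc{-\frac{\delta}{2},\frac{\delta}{2}}^d$ moves it by at most $\frac{\sqrt{d}}{2}\delta$, and snapping the perturbed vertex to the nearest point of $G_\delta$ moves it by a further at most $\frac{\sqrt{d}}{2}\delta$; hence each vertex of $P$ (resp.\ $Q$) is displaced by at most $\sqrt{d}\,\delta$ from its image, so $\distFr{P}{h^{t_P}_{\delta}(P)}\leq \sqrt{d}\,\delta$ and likewise for $Q$. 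Wait --- this naive bound gives $3\sqrt{d}\,\delta$ through the triangle inequality, which is weaker than needed; the right way is to observe that the composition of a shift by $t_i$ and snapping to $G_\delta$ is itself a snapping of $p_i$ to the shifted grid $\widehat{G}^{-t_i}_\delta$, which moves $p_i$ by at most $\frac{\sqrt d}{2}\delta$, and then the argument of \lemref{dfd:a2} applies verbatim (with the only subtlety that each vertex uses its own shift, but the per-vertex displacement bound $\frac{\sqrt d}{2}\delta$ is uniform, so the \Frechet displacement of the whole curve is still at most $\frac{\sqrt d}{2}\delta$). Then the triangle inequality through the (equal) hashed curves yields $\distFr{P}{Q}\leq \sqrt d\,\delta$, as in \lemref{dfd:a2}.

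The main obstacle, such as it is, lies entirely in part (ii): one must be careful that the per-vertex perturbations are independent across vertices, so the displacement argument has to be made vertex-by-vertex rather than by a single global shift as in the basic scheme. The key observation that rescues the same $\sqrt d\,\delta$ bound is that each vertex is independently displaced by at most $\frac{\sqrt d}{2}\delta$ regardless of which shift it received, and \lemref{components} / the definition of the \Frechet distance only needs a uniform per-vertex bound (realized via the identity traversal) to conclude $\distFr{P}{h^{t_P}_\delta(P)}\leq \frac{\sqrt d}{2}\delta$. Everything else is plugging in $\delta=4dr$ and $c=4d^{3/2}$.
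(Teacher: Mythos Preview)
Your approach is exactly the paper's: invoke \lemref{dfdb:a1} for (i) and rerun the \lemref{dfd:a2} argument for (ii). Part (i) is fine. The gap is in your ``fix'' for part (ii). The claim that perturbing by $t_i$ and then snapping to $G_\delta$ is itself a snapping of $p_i$ to the shifted grid $\widehat{G}^{-t_i}_\delta$ is not correct: the output $p_i''$ lies in $G_\delta$, whereas the snap of $p_i$ to $\widehat{G}^{-t_i}_\delta$ lies in $G_\delta - t_i$. What \emph{is} true is that $p_i'' - t_i$ equals the snap of $p_i$ to $\widehat{G}^{-t_i}_\delta$, and hence
\[
\|p_i - p_i''\| \;\le\; \|p_i - (p_i'' - t_i)\| + \|t_i\| \;\le\; \tfrac{\sqrt d}{2}\,\delta + \tfrac{\sqrt d}{2}\,\delta \;=\; \sqrt d\,\delta,
\]
which is precisely your ``naive'' bound, not $\tfrac{\sqrt d}{2}\delta$. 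In one dimension, take $p = 0.9\delta$ and $t = -0.49\delta$: then $p+t = 0.41\delta$ snaps to $0$, so $|p - p''| = 0.9\delta$, showing the per-vertex displacement can be arbitrarily close to $\delta$.

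The triangle-inequality argument therefore yields only $\distFr{P}{Q} \le 2\sqrt d\,\delta$ (your ``$3\sqrt d\,\delta$'' is an arithmetic slip --- there are two per-curve displacements of $\sqrt d\,\delta$ and the middle term vanishes), which with $\delta = 4dr$ gives $c = 8d^{3/2}$ rather than $4d^{3/2}$. The paper's own proof is the one-liner ``using the same arguments as in the proof of \lemref{dfd:a2}'', and that argument, carried through honestly for this scheme, produces the same $8d^{3/2}$; the stated $4d^{3/2}$ appears to overlook the extra perturbation step. The approximation factor is still $O(d^{3/2})$ either way, so the substance of the theorem is unaffected, but your attempted sharpening to recover $c = 4d^{3/2}$ does not go through.
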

\begin{proof}
The  theorem follows by plugging in the bounds of \lemref{dfdb:a1} and by using same arguments as in the proof of \lemref{dfd:a2}.
\end{proof}

\section{Trade-off between approximation factor and query time}
\seclab{improve}
\seclab{improved:all}
In the previous two sections we have seen  schemes with linear and constant
approximations.  We now suggest a scheme  exhibiting a trade-off between the
collision probability of near points and the approximation factor. 
The basic idea is to randomly partition the input curves and to concatenate the
outcome of the basic LSH (\secref{basic:all}) applied to the different parts of
the curves.  

\subsection{Algorithm}
The scheme is asymmetric. Again, we assume that we have two types of curves,
which we call {input} and {query curves}. The difference in how they are
handled lies in the way we create the partition.   For an input
curve $P=p_1,\dots,p_{m}$, we randomly sample a partition into $K$
subsequences. To this end, we denote a partition of $P$ with 
$\Partition{s}{P} = \pth{\subseq{P}{1},\dots, \subseq{P}{K}}$ where the
subsequences are defined by a monotone sequence $s \in [m]^{K-1} $ as
follows.
\begin{eqnarray*}
 \subseq{P}{1}= p_{1},\dots,p_{s_1}; \qquad
 \forall~ 1 < i < K~:~
 \subseq{P}{i}= p_{s_{i-1}},\dots,p_{s_{i}}; \qquad
 \subseq{P}{K}= p_{s_{K-1}},\dots,p_{m}.
\end{eqnarray*}
There are at most $\binom{m+K-1}{K-1}$ ways to partition a curve of length
$m$ in this way.  We denote with $\Partitions_K$ the family of all valid
partitions for a given $m$.  
Let $t=t_1,\ldots, t_K$ be a sequence of independent random values evenly distributed in $[0,\delta)^d$.
Once we have partitioned the input curve $P$ into $K$ (overlapping) subsequences, 
we apply the basic LSH to each individual subsequence and concatenate the
resulting curves:
\begin{align*}
g^{t,s}_{\delta, K}( P) 
= h^{t_1}_{\delta}\pth{\subseq{P}{1}} \oplus
h^{t_2}_{\delta}\pth{\subseq{P}{2}} \oplus \dots \oplus
h^{t_K}_{\delta}\pth{\subseq{P}{K}}.
\end{align*}

A query curve $Q=q_1,\dots,q_{m}$ is subdivided into $K$ equal-sized
subsequences (deterministically), where the last subsequence may be shorter and
two consecutive sequences overlap by one element.  We  denote with
$\Partition{*}{Q}$ this partitioning into equal-sized subsequences.  For query
curves, we define $ g^{t,*}_{\delta, K}(Q) $ to be the resulting curve given by
applying the basic LSH to each individual subsequence and concatenating the
resulting curves.

For any given $\delta>0$ and $K \geq 1$, we denote with
$\Hash^{\texttt{T}}_{\delta,K}$ the family of asymmetric hash functions created
this way: that is, $\Hash^{\texttt{T}}_{\delta,K}$ consists of  tuples
$(g_{\delta, K}^{t,s},g_{\delta, K}^{t,*})$ where the entries of $t$ are independently and
identically distributed  in $[0,\delta)^d$ and $\Partition{s}{P}$ is uniformly
chosen at random from $\Partitions_K$.

\subsection{Analysis}

We have the following theorem which generalizes \thmref{basic:dfd}. Using the
parameter $K$ we get a tradeoff between approximation factor and  query time.

%
\begin{theorem}\thmlab{improved:dfd}
  Let $P, Q \in \CurveSet$ be two curves with $m_1$ and $m_2$ points,
respectively, and let $M=\max\{m_1,m_2\}$. 
Let $K\geq1$ be a given integer and let  $\delta= 4dr \cdot \ceil{\frac{M}{K}}$ and
$c=4d^{\frac{3}{2}}\cdot\ceil{\frac{M}{K}}$.  It holds that 
\begin{compactenum}[(i)]
\item if $\distFr{P}{Q} < r$, then 
$\Prob{g_{\delta, K}^{t,s}(P)=g_{\delta, K}^{t,*}(Q)}{\Hash^{\texttt{T}}_{\delta,K}} \geq \pth{\frac{1}{2}}^{K} \cdot \binom{ M+K-1}{K-1}^{-1}$;
\item if $\distFr{P}{Q} > cr$, then 
$\Prob{g_{\delta, K}^{t,s}(P)=g_{\delta, K}^{t,*}(Q)}{\Hash^{\texttt{T}}_{\delta,K}} = 0.$
\end{compactenum}
\end{theorem}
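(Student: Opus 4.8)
The plan is to mirror the structure of the proof of \thmref{basic:dfd}, but now account for the random partition. Part~(ii) is the easy direction: if $\distFr{P}{Q} > cr$ with $c = 4d^{3/2}\ceil{M/K}$, then I would argue that no matter which partition $\Partition{s}{P}$ is chosen and which partition $\Partition{*}{Q}$ is used for the query, the concatenated hash values cannot coincide. Indeed, if $g^{t,s}_{\delta,K}(P) = g^{t,*}_{\delta,K}(Q)$, then in particular the two curves obtained by \emph{re-concatenating} the snapped (but not yet deduplicated) pieces differ from $P$ and $Q$ respectively by at most $\tfrac{\sqrt d}{2}\delta$ in \Frechet distance (snapping each vertex moves it by at most that much), and the concatenation of equal hash pieces has \Frechet distance $0$; so by the triangle inequality $\distFr{P}{Q}\le \sqrt d\,\delta = 4d^{3/2}\ceil{M/K}\,r = cr$, a contradiction. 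This is exactly the argument of \lemref{dfd:a2} applied piecewise, and I expect it to go through with only minor bookkeeping about the overlap vertices at the partition boundaries (which are shared, hence snapped consistently only if the shifts agree — but here we must be a bit careful since different pieces use different shifts $t_i$; the cleanest fix is to note that the overlap vertex of $\subseq{P}{i}$ and $\subseq{P}{i+1}$ appears in both pieces and is snapped by $h^{t_i}_\delta$ in the first and $h^{t_{i+1}}_\delta$ in the second, so the bound on displacement still holds for each occurrence separately, which is all that is needed).

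For part~(i), suppose $\distFr{P}{Q} < r$. The first step is to condition on a ``good'' partition event. I would fix the deterministic query partition $\Partition{*}{Q}$ into $K$ nearly-equal pieces, each of length at most $\ceil{M/K}$, with consecutive pieces overlapping in one vertex. The key combinatorial observation is that there exists at least one partition $s^\star$ of $P$ that is ``compatible'' with $\Partition{*}{Q}$ in the following sense: take an optimal traversal $T$ of $P$ and $Q$ (with disconnected-star components, via \lemref{components}); the boundary vertices $q_{s_i^*}$ of the query pieces induce, through $T$, a corresponding set of cut indices on $P$, and cutting $P$ at those indices yields pieces $\subseq{P}{i}$ such that $\subseq{P}{i}$ and $\subseq{Q}{i}$ still admit a traversal of cost $<r$ (a sub-traversal of $T$). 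Since the partition $\Partition{s}{P}$ is chosen uniformly from $\Partitions_K$, the probability of hitting this particular compatible $s^\star$ is at least $\binom{M+K-1}{K-1}^{-1}$.

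The second step is: conditioned on having chosen $s^\star$, bound the probability that all $K$ pieces collide. For each $i$, the pieces $\subseq{P}{i}$ and $\subseq{Q}{i}$ have \Frechet distance $<r$ and each has at most $\ceil{M/K}$ vertices, so by \lemref{dfd:a1} (with $m$ replaced by $\ceil{M/K}$ and $\delta = 4dr\ceil{M/K}$) the collision probability of the $i$-th pair under the independent shift $t_i$ is at least $1 - 2d\ceil{M/K}\cdot \tfrac{r}{\delta} = 1 - \tfrac12 = \tfrac12$. Since the shifts $t_1,\dots,t_K$ are independent, the events ``$i$-th piece collides'' are independent, so all $K$ pieces collide with probability at least $(1/2)^K$. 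If every piece collides, then the concatenations are equal, hence $g^{t,s^\star}_{\delta,K}(P) = g^{t,*}_{\delta,K}(Q)$. Multiplying the two conditional bounds gives the claimed $\pth{\tfrac12}^K \binom{M+K-1}{K-1}^{-1}$.

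The main obstacle I anticipate is making precise the ``compatible partition'' argument — specifically, showing that cutting $P$ at the $T$-images of the query boundary vertices really does produce exactly $K$ subsequences forming a valid element of $\Partitions_K$ (one must check monotonicity of the induced cut sequence, that boundary vertices are shared correctly so consecutive $P$-pieces overlap in one vertex, and that degenerate cases — e.g.\ several query boundaries mapping to the same $P$-vertex, or a piece of $P$ becoming empty — are handled, perhaps by allowing repeated indices in $s$, which $\Partitions_K$ already permits). Once the existence of $s^\star \in \Partitions_K$ is established, the rest is a routine combination of \lemref{dfd:a1}, independence of the $t_i$, and the uniform choice of the partition. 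I would also double-check that the shared boundary vertex between $\subseq{P}{i}$ and $\subseq{P}{i+1}$, which may be snapped inconsistently by $t_i$ versus $t_{i+1}$, does not break the collision argument — it does not, because within piece $i$ it is compared only against the corresponding boundary vertex of $\subseq{Q}{i}$ under the \emph{same} shift $t_i$, and similarly for piece $i+1$.
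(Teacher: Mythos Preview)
Your proposal is correct and essentially mirrors the paper's proof: condition on the random partition of $P$ being \emph{consistent} (via the optimal traversal $T$) with the fixed query partition $\Partition{*}{Q}$, then apply \lemref{dfd:a1} independently to each of the $K$ pairs of pieces using the independent shifts $t_i$; for part~(ii) the paper combines the $K$ sub-traversals into a global traversal of $P$ and $Q$ rather than invoking the triangle inequality on the concatenated curves as you do, but the content is the same. One small imprecision to fix: it is not true that each $\subseq{P}{i}$ has at most $\ceil{M/K}$ vertices (only the $\subseq{Q}{i}$ do, by construction of $\Partition{*}{Q}$), but \lemref{dfd:a1} only requires the \emph{shorter} of the two subsequences to be bounded, so your per-piece bound of $\tfrac12$ is still valid.
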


\begin{proof}
We first prove (i). Let $T$ be an optimal traversal of $P$ and $Q$.
We say two partitions $\Partition{s}{P}$ and $\Partition{r}{Q}$ are 
\emph{consistent} with respect to $T$ if and only if $(s_i,r_i) \in T$ for all $1 \leq i \leq K-1$.
Let $E$ denote the event that the partition $\Partition{s}{P}$ used in the hash
functions is consistent with $\Partition{*}{Q}$ with respect to $T$. 
By construction this happens for at least one of the partitions in
$\Partitions_K$. Therefore, $\Prob{E}{} \geq \frac{1}{|\Partitions_K|}$.  Now,
let $E_i$ be the event that 
$h^{t_i}_{\delta}\pth{\subseq{P}{i}}=h^{t_i}_{\delta}\pth{\subseq{Q}{i}}.$
By \lemref{dfd:a1} we have that
\begin{align*}
\Prob{E_i~|~E}{} 
&\geq 1- \pth{2 d m' \cdot \frac{\distFr{\subseq{P}{i}}{\subseq{Q}{i}} }{\delta}} 
 \geq 1- \pth{2 d \ceil{\frac{M}{K}} \cdot \frac{\distFr{P}{Q}}{\delta}}
 \geq \frac{1}{2}
\end{align*}
Note that we can assume $m' \leq  \ceil{\frac{M}{K}}$ in  the above inequality, since
$m'$ is the length of the shorter of the two subsequences in the lemma. By
construction, the length of $\subseq{Q}{i}$ will be at most $\ceil{\frac{M}{K}}$. 

Since the values $t_i$ are chosen pairwise independent, we have  
\begin{align*}  
\Prob{g_{\delta, K}^{t,s}(P)=g_{\delta, K}^{t,*}(Q)}{\Hash^{\texttt{T}}_{\delta,K}}
\geq  \pth{\prod_{1\leq i\leq K} \Prob{E_i ~|~ E}{}} \cdot \Prob{E}{}
\geq   \pth{\frac{1}{2}}^{K} \cdot \frac{1}{|\Partitions_K|} 
\end{align*}  
Using $|\Partitions_{K}| \leq \binom{M+K-1}{K-1}$, the first part of the claim follows.

As for the second part of the claim,  we can use \lemref{dfd:a2} applied to the subsequences.
If there exists a partition of $P$, 
and there exist $t=t_1,\dots,t_K$, such that for all $0 \leq i \leq K$
$h^{t_i}_{\delta}\pth{\subseq{P}{i}}=h^{t_i}_{\delta}\pth{\subseq{Q}{i}}$,
then  it holds  by \lemref{dfd:a2} that 
$\distFr{\subseq{P}{i}}{\subseq{Q}{i}} \leq \sqrt{d} \cdot \delta.$
In this case, we can combine the traversals of the subsequences to a traversal
of the entire curves. This combined traversal has the same cost, therefore 
it follows that $\distFr{P}{Q} \leq \sqrt{d} \cdot \delta.$
Consequently, if 
\[\distFr{P}{Q} > cr = \frac{4d^{\frac{3}{2}}M}{K} r =  \sqrt{d} \cdot \delta,\]
then it cannot happen that  
$g_\delta^{t,s}(P)=g_\delta^{t,*}(Q)$
for any combination of $t=t_1,\dots,t_K$ and $s$.
\end{proof}

\begin{corollary}\corlab{improved:dfd:bound}
  Let $P, Q \in \CurveSet$ be two curves with $m_1$ and $m_2$ points,
respectively, and let $M=\max\{m_1,m_2\}$. 
Let $K\geq1$ be a given integer and let  $\delta= 4dr \cdot \ceil{\frac{M}{K}}$ and
$c=4d^{\frac{3}{2}}\cdot\ceil{\frac{M}{K}}$.  It holds that 
\begin{compactenum}[(i)]
\item if $\distFr{P}{Q} < r$, then 
$\Prob{g_{\delta, K}^{t,s}(P)=g_{\delta, K}^{t,*}(Q)}{\Hash^{\texttt{T}}_{\delta,K}} > \pth{\frac{1}{4}}^{K}\cdot \pth{\frac{1}{M}}^{K-1}$;
\item if $\distFr{P}{Q} > cr$, then 
$\Prob{g_{\delta, K}^{t,s}(P)=g_{\delta, K}^{t,*}(Q)}{\Hash^{\texttt{T}}_{\delta,K}} = 0.$
\end{compactenum}
\end{corollary}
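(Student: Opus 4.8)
The plan is to derive Corollary~\ref{cor:improved:dfd:bound} directly from Theorem~\ref{theo:improved:dfd} by simplifying the two quantities appearing in part~(i), namely $\pth{\frac{1}{2}}^{K}$ and $\binom{M+K-1}{K-1}^{-1}$; part~(ii) is identical to Theorem~\ref{theo:improved:dfd}(ii) and requires nothing further. So the only work is to show
\[
\pth{\frac{1}{2}}^{K} \cdot \binom{M+K-1}{K-1}^{-1} > \pth{\frac{1}{4}}^{K}\cdot \pth{\frac{1}{M}}^{K-1},
\]
which, after cancelling $\pth{\frac{1}{2}}^{K}$ from both sides, is equivalent to the clean combinatorial inequality $\binom{M+K-1}{K-1} < 2^{K} M^{K-1}$.

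First I would bound the binomial coefficient crudely. Write $\binom{M+K-1}{K-1} = \binom{M+K-1}{M}$ and use the standard estimate $\binom{a}{b}\leq \pth{\frac{ea}{b}}^{b}$, or more simply $\binom{M+K-1}{K-1}\leq \frac{(M+K-1)^{K-1}}{(K-1)!}\leq (M+K-1)^{K-1}$. Since we may assume $K\leq M$ (the parameter range of interest; if $K>M$ the partition is degenerate), we have $M+K-1\leq 2M$, hence $\binom{M+K-1}{K-1}\leq (2M)^{K-1} = 2^{K-1}M^{K-1} < 2^{K}M^{K-1}$, which is exactly what is needed. I would state this as a one-line computation inside the proof.

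The remaining step is purely formal: substitute this bound into Theorem~\ref{theo:improved:dfd}(i). Since the collision probability is at least $\pth{\frac{1}{2}}^{K}\binom{M+K-1}{K-1}^{-1}$ and $\binom{M+K-1}{K-1}^{-1} > 2^{-K} M^{-(K-1)}$, we get a lower bound of $\pth{\frac{1}{2}}^{K}\cdot 2^{-K} M^{-(K-1)} = \pth{\frac{1}{4}}^{K}\pth{\frac{1}{M}}^{K-1}$, establishing (i). The inequality is strict because the binomial estimate above is strict.

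There is essentially no obstacle here — the corollary is just a readability repackaging of the theorem, trading the exact binomial expression for a loose but transparent power-of-$M$ bound that is easier to plug into the generic LSH-to-data-structure reduction of Section~\ref{sec:lsh}. The only point requiring a word of care is making explicit the implicit assumption $K\leq M$ (equivalently $K\in[m]$, as in the abstract's statement of the trade-off), since for $K>M$ one could not write $M+K-1\leq 2M$; I would note that the data-structure application only ever instantiates $K$ in this range, so the restriction is harmless.
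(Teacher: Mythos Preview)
Your proposal is correct and essentially identical to the paper's proof: both reduce to the bound $\binom{M+K-1}{K-1}\leq (2M)^{K-1}$ and then substitute into Theorem~\ref{theo:improved:dfd}. The only minor difference is that you justify this inequality via $(M+K-1)^{K-1}\leq (2M)^{K-1}$ under the assumption $K\leq M$, whereas the paper simply asserts it; in fact the inequality holds for all $K,M\geq 1$ (write $\binom{M+K-1}{K-1}=\prod_{i=1}^{K-1}\frac{M+i}{i}$ and note each factor is at most $2M$), so the restriction is unnecessary though, as you say, harmless for the application.
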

\begin{proof}
The result just follows from the previous \thmref{improved:dfd} by observing that $ \binom{ M+K-1}{K-1}^{-1}\geq 1/(2M)^{k-1}$.
\end{proof}

%
%
%

\section{Handling constrained alignments}
\seclab{constraints}
We now focus on LSH for discrete \Frechet distance with constraints on the alignment. 
We first target the $w$-anchored distance in \secref{anchored}, and then the $w$-speed distance in \secref{speed}.
As in the previous sections, the schemes are asymmetric and consist of a partitioning of the curve into subsequences and on the application of the basic LSH scheme to each subsequence.
However, the partitions are different since they leverage on random processes on both input and query curves,  consecutive subsequences do not overlap, and the constraints are exploited.
We let $\ell \geq 1$ denote an arbitrary given integer that allows to trade-off the collision  probability of near curves with a bi-criteria approximation on the distance and on the anchored alignment.

\subsection{LSH for anchored distances}
\seclab{anchored}
Consider an input curve $P=p_1,\dots,p_{m}$ and let $r_P=r_{P,1}, r_{P,2},\ldots r_{P,m}$ and $t=t_1, t_2,\ldots, t_m$ denote sequences of independent and identically distributed random variables in $[1, w/2]$ and $[0,\delta)^d$ respectively, where $\delta$ is a suitable parameter defined later.
The partition of $P$ consists of a fixed partitioning into subsequences of length $\ell$, followed by a random perturbation of subsequence lengths. Specifically, the following three operations are performed:
\begin{compactenum}[(i)]
\item Partition $P$ into subsequences $\subseq{P'}{1},\dots, \subseq{P'}{K'}$ with $K'=\lceil m/\ell \rceil$ of size $\ell$, with the possible exception of the last subsequence. 
Let $s' \in [m]^{K'+1} $ be the vector denoting the final indexes of each subsequence, that is 
$\subseq{P}{i}= p_{\pth{s'_{i-1}+1}},\dots, p_{s'_{i}}$: we have $s'_0=0$, $s'_{K'}=m$ and $s'_{i}=i\ell$ for each $1\leq i < K'$.
\item Random perturb the final index of each subsequence with the random vector $r_P$: for each $1\leq i < K'$,  set $s_{i}=\min\{s_{i}+r_{p,2} , m\}$.
\item Clean the partition by removing overlaps among subsequences: for each $1 \leq i < K'$ and starting from $i=1$, remove each subsequence where $s'_{i}\leq s'_{j}$ for some $j<i$. 
We let  $\Partition{r_P}{P} = \pth{\subseq{P}{1},\dots, \subseq{P}{K}}$ denote the resulting  partition of $P$ with $K \leq \lceil m/\ell \rceil$ and let $s_{P} \in [m]^{K+1} $ be the resulting vector denoting the final indexes of each subsequence (note that each subsequence has now length at most $\ell+w$). 
\end{compactenum}

Once  curve $P$ has been partitioned into $K$ subsequences, we apply the basic LSH in \secref{basic:all} to each  subsequence using the random shifts given by sequence $t$.
Specifically, we snap the $i$-th subsequence $\subseq{P}{i}$ on a grid of side $\delta$ shifted by the random value $t_i$ and remove consecutive duplicates within each subsequence; the remaining values denote the hash value of $\subseq{P}{i}$ and we denote them with $h^{t_i}_{\delta}\pth{\subseq{P}{i}}$.
The final hash value $g^{t, r_P}_{w,\delta,\ell}(P)$  of curve  $P$ is the array containing the hash of each subsequence, specifically:
\begin{align*}
g^{t, r_P}_{w,\delta,\ell}(P) = 
\left( h^{t_1}_{\delta}\pth{\subseq{P}{1}}, 
h^{t_2}_{\delta}\pth{\subseq{P}{2}}, \dots,
h^{t_K}_{\delta}\pth{\subseq{P}{K}}\right).
\end{align*}
We observe that the final hash value is not a curve as in previous sections, but an array of curves. Equality between two curves then holds only if the two hash values have the same length and coincide in each position (i.e., the hash values $((a,b),(c))$ and $((a),(b,c))$ do not collide, but they collide if their are considered as a single curve $(a,b,c)$). This enforces the alignment constraint.

The hash process of a query curve $Q$ is the same: however, a different random sequence $r_Q$ is used to partition the curve, while the same sequence $t$ of random shifts is kept. Due to the different random bits in $r_Q$ the proposed LSH scheme is asymmetric.
We let $\Hash_{w,\delta,\ell}^{\texttt{A}}$  denote the hash family consisting of all possible pairs of hash functions $\left(g^{t, r_P}_{w,\delta,\ell}, g^{t, r_Q}_{w,\delta,\ell}\right)$.

The next \thmref{anchored} shows that the scheme has a bi-criteria approximation:
In addition to the distance approximation $c$, the scheme has also an approximation  on the alignment.
As an example, we observe that two curves with a $w$-anchored distance larger than $cr$ can still collide if they have  a $w+2(\ell-1)$-anchored distance lower than $cr$. 
In order to prove the theorem, we introduce the two following lemmas.

\begin{lemma}\lemlab{anchored_near}
Let $P, Q \in \CurveSet$ be two curves with $m_1$ and $m_2$ points,
respectively and let $m=\min\{m_1,m_2\}$. Let $w$ be the traversal width, $\ell\geq 1$ be an arbitrary integer, $\delta= 4dr\ell$.
If $\distAFr{P}{Q} < r$, then
$\Prob{g^{t, r_P}_{w,\delta,\ell}(P)=g^{t, r_Q}_{w,\delta,\ell}(Q)}{\Hash_{w,\delta,\ell}^{\texttt{A}}}
> \pth{{1}/{\sqrt{2}w }}^{2m/\ell}$.
\end{lemma}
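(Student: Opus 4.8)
The plan is to follow the proof of \thmref{improved:dfd}: couple the two random partitions through an optimal $w$-anchored traversal, bound the probability that the coupling succeeds, and apply \lemref{dfd:a1} to each block. \emph{Setup.} Since $\distAFr{P}{Q}<r$, fix a $w$-anchored traversal $T$ of $P$ and $Q$ of cost $<r$; by \lemref{components} we may assume $T$ is a disjoint union of at most $m$ star components, and note $|m_1-m_2|\le w/2$. Call the sampled partitions $\Partition{r_P}{P}=\pth{\subseq{P}{1},\dots,\subseq{P}{K}}$ and $\Partition{r_Q}{Q}=\pth{\subseq{Q}{1},\dots,\subseq{Q}{K'}}$ \emph{aligned} with respect to $T$ if $K=K'$ and the pair of final indices $\pth{s_{P,i},s_{Q,i}}$ lies on $T$ for every interior index $i$. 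When the partitions are aligned, the piece of $T$ between $\pth{s_{P,i-1},s_{Q,i-1}}$ and $\pth{s_{P,i},s_{Q,i}}$ is a traversal of $\subseq{P}{i}$ and $\subseq{Q}{i}$ of cost $<r$; hence a collision of the two hash arrays is implied by the event $E$ that the partitions are aligned, together with the events $E_i:\ h^{t_i}_{\delta}\pth{\subseq{P}{i}}=h^{t_i}_{\delta}\pth{\subseq{Q}{i}}$ for $1\le i\le K$.

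\emph{Bounding the probability of $E$.} The $i$-th interior $P$-boundary is one of the $w/2$ vertices $p_{i\ell+1},\dots,p_{i\ell+w/2}$, each realized with probability $2/w$, and symmetrically for $Q$; these $2(K-1)$ choices are mutually independent. One shows that there is always at least one choice of $(r_P,r_Q)$ making the partitions aligned: walking along $T$ block by block, one uses monotonicity of $T$ and the $w$-anchored property to place each $\pth{s_{P,i},s_{Q,i}}\in T$ inside the admissible window $\brc{i\ell+1,\dots,i\ell+w/2}$ on both sides, and checks that the clean-up step deletes no block of the resulting partition. Fixing such a choice yields $\Prob{E}{}\ge\pth{2/w}^{2(K-1)}$.

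\emph{Bounding the conditional block probabilities, and conclusion.} Conditioned on $E$, the restriction $T_i$ of $T$ to block $i$ has $c_i$ star components with $\sum_{i=1}^{K}c_i\le m$, since cutting $T$ at the aligned boundary pairs assigns each component of $T$ to exactly one block. Re-running the union bound inside the proof of \lemref{dfd:a1} on $\subseq{P}{i},\subseq{Q}{i}$ with the traversal $T_i$ and $\delta=4dr\ell$ gives $\Prob{E_i\mid E}{}\ge 1-2dc_ir/\delta=1-c_i/(2\ell)$. As $t_1,\dots,t_K$ are mutually independent and independent of the partitions, combining the bounds and using $\sum_i c_i\le m$, the inequality $1-x\ge 4^{-x}$ on $[0,\tfrac12]$, and $K\le\ceil{m/\ell}$ gives
\[
\Prob{g^{t,r_P}_{w,\delta,\ell}(P)=g^{t,r_Q}_{w,\delta,\ell}(Q)}{\Hash_{w,\delta,\ell}^{\texttt{A}}} \ge \pth{\frac{2}{w}}^{2(K-1)}\prod_{i=1}^{K}\pth{1-\frac{c_i}{2\ell}} > \pth{\frac{1}{\sqrt{2}\,w}}^{2m/\ell}.
\]

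\emph{Main obstacle.} The delicate step is the combinatorial claim behind the bound on $\Prob{E}{}$: the $w$-anchored band has width exactly $w/2$ and the two partition windows at index $i$ coincide, so one must argue that the staircase of $T$ always meets the square $\brc{i\ell+1,\dots,i\ell+w/2}^{2}$ — the awkward case being when $T$ runs along the upper or lower edge of the band, where the freedom in which endpoint of a straddling star component to declare a boundary, together with the clean-up step, is what makes an aligned partition exist. A secondary technical point is that a block can be longer than $\ell$ (up to $\ell+w/2$), so the per-block estimate must be phrased through the number of components of $T$ inside a block — which sum to at most $m$ — rather than through block lengths, and one must verify these counts stay below $2\ell$ so the factors $1-c_i/(2\ell)$ remain positive.
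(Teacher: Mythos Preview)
Your approach is essentially the paper's. Both arguments fix an optimal $w$-anchored traversal $T$, identify for each interior index $i$ a specific pair of cut points $(s_{P,i},s_{Q,i})\in T$ lying in the window $\{i\ell+1,\dots,i\ell+w/2\}$, bound the probability that the random shifts $r_P,r_Q$ hit exactly those targets, and then apply \lemref{dfd:a1} to each block independently. The paper organises this as an induction on the block index with per-step probability $1/w^{2}\cdot\tfrac12$, arriving at $(1/(2w^{2}))^{K-1}\cdot 2^{-x/\ell}$; you separate the partition event $E$ from the blockwise events $E_i$ and get $(2/w)^{2(K-1)}\cdot\prod_i(1-c_i/(2\ell))$. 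Your alignment probability $(2/w)^{2}$ per boundary is in fact the sharp value, while the paper records the weaker $1/w^{2}$; and instead of claiming a flat $\tfrac12$ per block the paper uses, you track the component counts $c_i$ with $\sum_i c_i\le m$ and close via $1-x\ge 4^{-x}$. These are bookkeeping variants, not a different method.

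One small slip: the inequality $1-x\ge 4^{-x}$ only holds on $[0,\tfrac12]$, so what you need is $c_i\le\ell$, not $c_i<2\ell$ as you wrote; the latter merely keeps the factor positive but does not suffice for the exponential lower bound. This is exactly the place where the paper, too, tacitly uses that the shorter of the two $i$-th subsequences has at most $\ell$ components, so the gap you flag is the same one the paper leaves implicit.
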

\begin{proof}
 Let $T$ be an optimal $w$-anchored traversal of $P$ and $Q$, and let $c_1,\ldots c_v$ denote the $v$ non-overlapping components, and let $\pi_i$ and $\gamma_i$ denote the indexes of the components containing $p_i$ and $q_i$, respectively.
The two curves collide when the following two events happen for every $1\leq i \leq \lceil m/\ell \rceil$: 
\begin{itemize}
\item \emph{Event $E_{1,i}$:} $s_{P,i}=M_{P,i}$ and $s_{Q,i}=M_{Q,i}$, where  $M_{P,i}$ and $M_{Q,i}$ are the indexes of the rightmost vertices of $P$ and $Q$ in $c_{\pi_{i\ell-1}}$ and $c_{\gamma_{i\ell-1}}$.
\item \emph{Event $E_{2,i}$:} the hash values of the $i$-th subsequences of $P$ and $Q$ are the same (i.e., $h^{t_i}_{\delta}\pth{\subseq{P}{i}}=h^{t_i}_{\delta}\pth{\subseq{Q}{i}}$).
\end{itemize} 
Intuitively, the first event guarantees that the components of $T$ are not cut by the random partition, while the second event requires the basic LSH to work on each subsequence.
We show by induction that event $E_j = \bigcap_{i=1}^{j} (E_{1,i} \cap E_{2,i})$  (i.e., both events hold for the first $j$ subsequences) happens with probability at least $1/(\sqrt{2} w)^{2j}$, for $1\leq j < K=\lceil m/\ell\rceil$.

Assume $j=1$. By definition of $w$-anchored traversal, we have $i\ell \leq  M_{P,1}, M_{Q,1} \leq i\ell+w/2$.
Event $E_{1,1}$ happens with probability $1/w^2$, that is when the random shift moves the final indexes of $\subseq{P}{i}$  and $\subseq{Q}{i}$ in  $M_{P,i}$ and $M_{Q,1}$, respectively.
Conditioning on $E_{1,1}$,  event $E_{2,1}$ holds with probability at least $1/2$: indeed by \lemref{dfd:a1} we have that
\begin{align*}
\Prob{E_{2,1}|~E_{1,1}}{} 
&\geq 1- \pth{2 d \ell \cdot \frac{\distFr{\subseq{P}{i}}{\subseq{Q}{i}} }{\delta}} 
 \geq 1- \pth{2 d \ell \cdot \frac{\distAFr{P}{Q}}{\delta}}
 \geq \frac{1}{2}.
\end{align*}
Therefore, event $E_1$ happens with probability $1/(2 w^2)$.

Suppose now that $E_{j-1}$ holds with probability $1/(\sqrt{2} w)^{2(j-1)}$ and let $j<K$.
Since  $M_{P,j-1}$ and $M_{Q,j-1}$ are  the last indexes in segments $\subseq{P}{j-1}$ and $\subseq{Q}{j-1}$, we have $\Prob{E_{1,j}|~E_{j-1}}{} = 1/w^2$ by mimic the argument with $j=1$.
Further, we have that $\Prob{E_{2,j}|~E_{1,j}}{} \geq 1/2$. Therefore 
$\Prob{E_{j}|~E_{j-1}}{} =  \Prob{E_{1,j}|~E_{j-1}}{} * \Prob{E_{2,j}|~E_{1,j}}{} \geq 1/(\sqrt{2} w)^{2j}$. 

When $j=K$, we have that $\Prob{E_{1,K}|~E_{K-1}}{}=1$ since the end points  of $\subseq{P}{K}$ and $\subseq{Q}{K}$ are fixed on
$p_{m_1}$ and $Q_{m_2}$.
By the inductive assumption, the shortest sequence between $\subseq{P}{K}$ and $\subseq{Q}{K}$  contains less than $x=m\mod \ell$ vertices.
Then, we have 
\begin{align*}
\Prob{E_{2,K}|~E_{1,K}}{}  \geq 1- \pth{2 d x \cdot \frac{\distFr{\subseq{P}{i}}{\subseq{Q}{i}} }{\delta}}  \geq 1- \frac{x}{2 \ell}
 \geq \frac{1}{2^{x/\ell}}.
\end{align*}
Since $K = \lceil m/\ell\rceil$ and $x = m \mod \ell$, the two curves $P$ and $Q$ collide with probability 
$\Prob{E_{K-1}}{} \Prob{E_{2,K}}{}\geq 
(1/(2 w^2)^{K-1}) (1/2^{x/\ell}) \geq 1/(\sqrt{2} w)^{2 m/\ell}$.
\end{proof}

\begin{lemma}\lemlab{anchored_far}
Let $P, Q \in \CurveSet$ be two curves with $m_1$ and $m_2$ points,
respectively and let $m=\min\{m_1,m_2\}$.   Let $w$ be the traversal width, $\ell\geq 1$ be an arbitrary integer, $\delta= 4dr\ell$, and
$c=4d^{\frac{3}{2}} \ell$.
If $\distAFrx{P}{Q}{(w+2(\ell-1))} > cr$, then
$\Prob{g^{t, r_P}_{w,\delta,\ell}(P)=g^{t, r_Q}_{w,\delta,\ell}(Q)}{\Hash_{w,\delta,\ell}^{\texttt{A}}} = 0$.
\end{lemma}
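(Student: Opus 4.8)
The plan is to establish the zero-false-positive guarantee by contraposition: assuming there exist random parameters $t,r_P,r_Q$ with $g^{t,r_P}_{w,\delta,\ell}(P)=g^{t,r_Q}_{w,\delta,\ell}(Q)$, I will show $\distAFrx{P}{Q}{(w+2(\ell-1))}\le cr$. Since the hash value is an \emph{array} of curves and equality requires agreement in every position, a collision forces $P$ and $Q$ to be split into the same number $K$ of subsequences and, for every $1\le i\le K$, $h^{t_i}_{\delta}(\subseq{P}{i})=h^{t_i}_{\delta}(\subseq{Q}{i})$. Applying \lemref{dfd:a2} to each of these $K$ equalities (all with the same grid resolution $\delta=4dr\ell$) yields $\distFr{\subseq{P}{i}}{\subseq{Q}{i}}\le \sqrt{d}\,\delta=4d^{3/2}\ell r=cr$ for every $i$.

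Next I would glue the per-subsequence traversals together into a single global traversal. In this construction consecutive subsequences of $P$ (and of $Q$) are disjoint and together cover the curve from index $1$ to $m_1$ (resp.\ $m_2$); hence the optimal traversal of $(\subseq{P}{i},\subseq{Q}{i})$ ends at the pair of last indices of these two subsequences, while the optimal traversal of $(\subseq{P}{i+1},\subseq{Q}{i+1})$ starts at the pair of first indices of the next two, which are exactly one larger in each coordinate. Thus a single simultaneous (diagonal) step bridges the two, which is a legal traversal step, and no new pairs are introduced. Concatenating all $K$ traversals therefore produces a valid traversal $T$ of $P$ and $Q$ whose cost is $\max_i\distFr{\subseq{P}{i}}{\subseq{Q}{i}}\le cr$. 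It then suffices to verify that $T$ is a $(w+2(\ell-1))$-anchored traversal, since this gives $\distAFrx{P}{Q}{(w+2(\ell-1))}\le cr$ as required.

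For the anchoring bound, note that any pair $(a,b)\in T$ lies within a matched pair of subsequences $\subseq{P}{k},\subseq{Q}{k}$, whose index ranges are $\{s_{P,k-1}+1,\dots,s_{P,k}\}$ and $\{s_{Q,k-1}+1,\dots,s_{Q,k}\}$. The construction places each interior cut point of \emph{both} curves at the common value $k\ell$ and then perturbs it by a value in $[1,w/2]$, while the cleaning step only deletes cut points; the intended consequence is that the boundary indices of $\subseq{P}{k}$ and of $\subseq{Q}{k}$ both lie in the window $[(k-1)\ell+2,\;k\ell+w/2]$, so $|a-b|\le k\ell+w/2-((k-1)\ell+2)=\ell+w/2-2\le w/2+\ell-1$, and the latter quantity is exactly the half-width allowed by a $(w+2(\ell-1))$-anchored traversal. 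I expect this last step to be the main obstacle: one has to argue carefully that the perturbation-followed-by-cleaning procedure keeps the $k$-th \emph{surviving} cut points of $P$ and $Q$ inside overlapping windows of the claimed size (this is where the stated length bound $\le\ell+w$ on a subsequence enters), and to treat separately the first and last subsequences, whose outer endpoints are pinned at $1$ and at $m_1$ (resp.\ $m_2$) rather than being perturbed.
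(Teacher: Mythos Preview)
Your approach is essentially the paper's: argue by contraposition, apply \lemref{dfd:a2} to each matched pair of subsequences to obtain $\distFr{\subseq{P}{i}}{\subseq{Q}{i}}\le\sqrt{d}\,\delta=cr$, concatenate the per-piece traversals into a global traversal of $P$ and $Q$, and then check that this traversal is $(w+2(\ell-1))$-anchored because the index windows of the $k$-th subsequences of $P$ and of $Q$ can differ by at most $w/2+\ell-1$. The paper's proof makes exactly these moves but asserts the anchoring bound in a single sentence (``the partitioning guarantees that the indexes of vertices in $\subseq{P}{i}$ and $\subseq{Q}{i}$ differ by at most $w/2+\ell-1$'') without working through the cleaning step or the boundary subsequences; you go further by flagging precisely these points as the places requiring care.
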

\begin{proof}
By hypothesis, the two curves $P$ and $Q$ have $w\ell$-anchored \Frechet distance larger than $cr$, which implies there cannot be a $w\ell$-anchored traversal with cost smaller than or equal to $cr$. 
Assume that  $P$ and $Q$ collide under the described hashing scheme; then both curves have been split into $K$ subsequence and the hash values of $\subseq{P}{i}$ and $\subseq{Q}{i}$, for each $1\leq i \leq K$ collide.
By \lemref{dfd:a2}, there exists a traversal of cost at most $\sqrt{d}  \delta$ between $\subseq{P}{i}$ and $\subseq{Q}{i}$, for each $i$. Moreover, this traversal is a $w+2(\ell-1)$-anchored traversal since the partitioning guarantees that the indexes of vertices in $\subseq{P}{i}$ and $\subseq{Q}{i}$ differ by at most $w/2+\ell-1$.
This however implies that there exist a $w+2(\ell-1)$-traversal of $P$ and $Q$ of cost at most $\sqrt{d}  \delta = cr$, which is a contradiction.
Therefore, two curves with $(w+2(\ell-1))$-anchored discrete \Frechet distance cannot collide.
\end{proof}

\begin{theorem}\thmlab{anchored}
Let $P, Q \in \CurveSet$ be two curves with $m_1$ and $m_2$ points,
respectively and let $m=\min\{m_1,m_2\}$. Let $\ell\geq 1$ be an arbitrary integer, $\delta= 4dr\ell$, and
$c=4d^{\frac{3}{2}} \ell$.
Then, it holds that:
\begin{compactenum}[(i)]
\item if $\distAFr{P}{Q} < r$, then 
$\Prob{g^{t, r_P}_{w,\delta,\ell}(P)=g^{t, r_Q}_{w,\delta,\ell}(Q)}{\Hash_{w,\delta,\ell}^{\texttt{A}}}
> \pth{{1}/{\sqrt{2}w}}^{2m/\ell}$;
\item if $\distAFrx{P}{Q}{(w+2(\ell-1))} > cr$,  then 
$\Prob{g^{t, r_P}_{w,\delta,\ell}(P)=g^{t, r_Q}_{w,\delta,\ell}(Q)}{\Hash_{w,\delta,\ell}^{\texttt{A}}}= 0.$
\end{compactenum}
\end{theorem}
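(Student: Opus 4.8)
The plan is to derive \thmref{anchored} directly from the two auxiliary lemmas already stated, \lemref{anchored_near} and \lemref{anchored_far}, which together cover exactly the two cases in the theorem statement.

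\medskip
\noindent\textbf{Part (i).} First I would observe that part (i) of the theorem is literally the statement of \lemref{anchored_near}: under the hypothesis $\distAFr{P}{Q} < r$ and with $\delta = 4dr\ell$, the collision probability is bounded below by $\pth{1/\sqrt{2}w}^{2m/\ell}$. So this part requires no additional work beyond invoking the lemma; I would simply note that the parameters $\delta = 4dr\ell$ and $c = 4d^{3/2}\ell$ fixed in the theorem match those of the lemma.

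\medskip
\noindent\textbf{Part (ii).} Likewise, part (ii) is exactly \lemref{anchored_far}: if the $(w+2(\ell-1))$-anchored discrete \Frechet distance of $P$ and $Q$ exceeds $cr$, then under the scheme $\Hash_{w,\delta,\ell}^{\texttt{A}}$ no collision can occur, i.e.\ the probability is $0$. Again the parameter choices coincide. So the proof of the theorem is just a two-line assembly: ``Part (i) follows from \lemref{anchored_near} and part (ii) follows from \lemref{anchored_far}, using the stated values of $\delta$ and $c$.''

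\medskip
\noindent\textbf{Where the real work lies.} Since the theorem itself is a trivial corollary of the lemmas, the substantive obstacle is entirely inside \lemref{anchored_near} and \lemref{anchored_far}, whose proofs are already given in the excerpt. For completeness I would want to double-check two points in those proofs. In \lemref{anchored_near}, the inductive argument conditions event $E_{1,j}$ on $E_{j-1}$ and claims $\Prob{E_{1,j}\mid E_{j-1}}{} = 1/w^2$; the subtle point is that the perturbation $r_{P,i}$ (resp.\ $r_{Q,i}$) is a fresh random variable independent of everything used in the first $j-1$ blocks, so the conditional probability that the right endpoint lands exactly on the rightmost vertex $M_{P,j}$ of the current component is $1/(w/2) \cdot 1/(w/2)$ over the two curves — this is where the factor $w^2$ comes from, and it relies on the $w$-anchoredness bounding $|M_{P,j} - j\ell| \le w/2$. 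In \lemref{anchored_far}, the key structural claim is that concatenating the optimal traversals of the aligned subsequence pairs $(\subseq{P}{i}, \subseq{Q}{i})$ yields a valid global traversal whose width is at most $w + 2(\ell-1)$, because the partition guarantees the index offset within a block stays within $w/2 + \ell - 1$; this uses \lemref{dfd:a2} for the per-block cost bound $\sqrt{d}\,\delta = cr$. Both of these are already handled, so the proof of \thmref{anchored} is simply:

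\begin{proof}
Both claims follow directly from the preceding lemmas with the stated parameters $\delta = 4dr\ell$ and $c = 4d^{\frac{3}{2}}\ell$: part (i) is \lemref{anchored_near}, and part (ii) is \lemref{anchored_far}.
\end{proof}
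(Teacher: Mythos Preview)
Your proposal is correct and matches the paper's own proof essentially verbatim: the paper simply states that the theorem follows from \lemref{anchored_near} and \lemref{anchored_far}, exactly as you do. Your additional commentary on where the substantive work lies in those lemmas is accurate in spirit, though it goes beyond what is needed for the theorem itself.
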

\begin{proof}
The theorem follows from Lemmas~\ref{lem:anchored_near} and~\ref{lem:anchored_far}.
\end{proof}

\subsection{LSH for bounded-speed distances}
\seclab{speed}
Consider an input curve $P=p_1,\dots,p_{m}$ and let $r_P=r_{P,1}, r_{P,2},\ldots, r_{P,m}$ and $t=t_1, t_2,\ldots t_m$ denote sequences of independent and identically distributed random variables in $[1,w \ell]$ and $[0,\delta)^d$ respectively.
We random partition curve $P$ into non overlapping subsequences of length given by the random sequence $r_{p}$.
Specifically, let $\Partition{s}{P} = \pth{\subseq{P}{1},\dots, \subseq{P}{K}}$ denote a partition of $P$ with $m/(w\ell) \leq K \leq m$ and let $s \in [m]^{K+1} $ be the vector denoting the initial and final indexes of a subsequence, that is 
$\subseq{P}{i}= p_{s_{i-1}+1},\dots,p_{s_{i}}$.
Then, $s$ satisfies the following conditions :
\begin{inparaenum}[(i)]
\item $s_0=0$ and $s_{K}=m$,
\item for $1\leq i \leq K$, $s_{i}- s_{i-1} = r_{p,1}$, which implies that  $s_{i}=\sum_{j=1}^{i} r_{p,1}$.
\end{inparaenum}
Once we have partitioned curve $P$ into $K$ subsequences, we continue as in the $w$-anchored LSH by applying the basic LSH to each  subsequence using the random shifts given by sequence $t$. 
For a query curve $Q$, the hash process  is the same, but a different random sequence $r_Q$ is used to partition the curve. 
We let $\Hash_{w,\delta,\ell}^{\texttt{S}}$ denote the hash family consisting of all possible pairs of hash functions $\left(g^{t, r_P}_{w,\delta,\ell}, g^{t, r_Q}_{w,\delta,\ell}\right)$.

The following \thmref{speed}  shows that the scheme has a bi-criteria approximation (note that the  alignment approximation in point $(ii)$ differs from the one for the anchored distance).  Its proof depends on the next two lemmas. 

\begin{lemma}\lemlab{speed_near}
Let $P, Q \in \CurveSet$ be two curves with $m_1$ and $m_2$ points,
respectively and let $m=\min\{m_1,m_2\}$. Let $w$ be the traversal width, $\ell\geq 1$ be an arbitrary integer, $\delta= 4dr\ell$.
If $\distSFr{P}{Q} < r$, then
$\Prob{g^{t, r_P}_{w,\delta,\ell}(P)=g^{t, r_Q}_{w,\delta,\ell}(Q)}{\Hash_{w,\delta,\ell}^{\texttt{S}}}
> \pth{{1}/{\sqrt{2}w \ell }}^{2m/\ell}$.
\end{lemma}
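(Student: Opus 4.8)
The plan is to mimic the structure of the proof of \lemref{anchored_near} as closely as possible, adapting it to the speed constraint. The key structural change is that, unlike the anchored case, here both curves are partitioned into subsequences whose lengths are fully random (drawn from $[1,w\ell]$), so the partition points are cumulative sums of independent uniform lengths rather than small perturbations of a fixed grid. First I would fix an optimal $w$-speed traversal $T$ of $P$ and $Q$, and invoke \lemref{components} so that $T$ consists of $v \leq m$ disconnected star components $c_1,\dots,c_v$. As before, I want to define, for each subsequence index $i$, events guaranteeing (a) that the partition points $s_{P,i}$ and $s_{Q,i}$ fall at the right-hand ends of components of $T$ so that no component is cut by the partition, and (b) that the basic LSH succeeds on the $i$-th pair of subsequences. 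Then a collision of $P$ and $Q$ follows if both events hold for every $i$, and the combined traversal of the subsequences is a valid traversal of the whole curves.

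The main obstacle — and the reason the bound degrades from $(1/\sqrt{2}w)^{2m/\ell}$ to $(1/\sqrt{2}w\ell)^{2m/\ell}$ — is controlling the probability that the random partition is "component-aligned" and that the subsequence lengths stay bounded. In the anchored case the perturbations were at most $w/2$, the subsequences had length at most $\ell+w$, and there were exactly $\lceil m/\ell\rceil$ of them. Here each random length lies in $[1,w\ell]$, so the number $K$ of subsequences can be as large as $m$, which would be fatal. The fix is to observe that the events we condition on force each component of $T$ to lie inside a single subsequence, and each subsequence to cover a full ``block'' of the traversal; one then argues that conditioned on the favorable events the effective number of subsequences is at most $2m/\ell$ (roughly, each successful subsequence consumes $\Omega(\ell)$ vertices of the shorter curve because a $w$-speed component has degree at most $w$ and we are choosing lengths so that a subsequence of $P$ spans about $\ell$ components), and that for each such subsequence the probability of the alignment event is at least $1/(w\ell)^2$ — this is where the extra $\ell$ per factor enters, since the partition point must land in a window of size $w\ell$ rather than $w$. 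Conditioned on the alignment event, \lemref{dfd:a1} applied to the $i$-th subsequence pair (whose shorter side has length $O(\ell)$ and whose \Frechet distance is at most $\distSFr{P}{Q}<r=\delta/(4d\ell)$) gives collision probability at least $1/2$, exactly as in the anchored proof.

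Putting these together, I would run the same induction as in \lemref{anchored_near}: let $E_j$ be the event that both the alignment event and the LSH-success event hold for the first $j$ subsequences, show $\Prob{E_j \mid E_{j-1}}{} \geq 1/(w\ell)^2 \cdot 1/2 = 1/(2 w^2\ell^2) = 1/(\sqrt{2}w\ell)^2$, handle the final subsequence (whose endpoint is pinned at $p_{m_1}$ and $q_{m_2}$, so its alignment event has probability $1$, and whose LSH-success probability is at least $2^{-x/\ell}$ for the leftover length $x<\ell$ by the same computation as before), and multiply through. Since the number of subsequences is at most $2m/\ell$ under the conditioning, the product telescopes to $\Prob{E_K}{} \geq (1/(\sqrt{2}w\ell)^2)^{m/\ell}\cdot 2^{-x/\ell} \geq (1/\sqrt{2}w\ell)^{2m/\ell}$, and a collision of $P$ and $Q$ follows from $E_K$ exactly as in the anchored case. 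The only genuinely new bookkeeping is verifying that the $w$-speed constraint indeed lets a length-$\Theta(\ell)$ subsequence of one curve be matched, component-by-component, with a length-$\Theta(\ell)$ subsequence of the other without cutting any component — which is where one uses that a $w$-speed traversal has degree at most $w$ and that the random lengths range up to $w\ell$, giving enough slack to absorb a full block of up to $\ell$ components.
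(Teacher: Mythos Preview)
Your proposal is correct and follows essentially the same approach as the paper. The paper's proof is equally brief: it fixes an optimal $w$-speed traversal with components $c_1,\dots,c_v$, groups them into blocks of $\ell$ consecutive components, and defines $E_{1,i}$ as the event that the $i$-th subsequences of $P$ and $Q$ contain \emph{exactly} the vertices of components $c_{(i-1)\ell+1},\dots,c_{i\ell}$; since each side of a block has at most $w\ell$ vertices and the random lengths are uniform on $[1,w\ell]$, this event has probability $1/(w\ell)^2$, and the rest is identical to \lemref{anchored_near}. Two small imprecisions in your write-up are worth tightening: the number of blocks is $\lceil v/\ell\rceil\le\lceil m/\ell\rceil$ (not $2m/\ell$; your final computation already uses the right count), and the reason $\Prob{E_{2,i}\mid E_{1,i}}\ge 1/2$ is that the $i$-th subsequence pair has exactly $\ell$ \emph{components} of $T$ (the relevant parameter in the union bound of \lemref{dfd:a1}), not that the shorter side has $O(\ell)$ vertices---in fact both sides may have up to $w\ell$ vertices.
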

\begin{proof}
 Let $T$ be an optimal $w$-speed traversal of $P$ and $Q$, and let $c_1,\ldots c_v$ denote the $v$ non-overlapping components.
As for the anchored version, the two curves collide when the following two events happen for every $1\leq i \leq \lceil v/\ell \rceil$ (note a different definition for event $E_{1,i}$): 
\begin{itemize}
\item \emph{Event $E_{1,i}$:} all nodes of $P$ and $Q$ in components $c_{i\ell+1},\ldots, c_{(i+1)\ell}$ are contained in the $i$-th subsequence of $P$ and $Q$;
\item \emph{Event $E_{2,i}$:} the hash values of the $i$-th subsequences of $P$ and $Q$ are the same (i.e., $h^{t_i}_{\delta}\pth{\subseq{P}{i}}=h^{t_i}_{\delta}\pth{\subseq{Q}{i}}$).
\end{itemize} 
The proof  continues as in \lemref{anchored_near}, where the only difference is in the probability of event $E_{1,i}$. 
Indeed, we have that event $E_{1,1}$ happens with probability $1/(w\ell)^2$, that is when $P_1$ and $Q_1$ contain only the at most $w\ell$ nodes of $P$ and $Q$ in the first $\ell$ components.
The same probability holds for $E_{1,i}$, conditioning on $E_{1,i-1}$.
\end{proof}

\begin{lemma}\lemlab{speed_far}
Let $P, Q \in \CurveSet$ be two curves with $m_1$ and $m_2$ points,
respectively and let $m=\min\{m_1,m_2\}$.   Let $w$ be the traversal width, $\ell\geq 1$ be an arbitrary integer, $\delta= 4dr\ell$, and
$c=4d^{\frac{3}{2}} \ell$.
If $\distSFrx{P}{Q}{w\ell} > cr$, then
$\Prob{g^{t, r_P}_{w,\delta,\ell}(P)=g^{t, r_Q}_{w,\delta,\ell}(Q)}{\Hash_{w,\delta,\ell}^{\texttt{S}}} = 0$.	
\end{lemma}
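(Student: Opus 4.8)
The plan is to mirror the structure of the proof of \lemref{anchored_far}, adapting it to the bounded-speed setting. Suppose for contradiction that $P$ and $Q$ collide under $\Hash_{w,\delta,\ell}^{\texttt{S}}$. Then both curves are partitioned into the same number $K$ of subsequences and, for each $1 \leq i \leq K$, the hash values $h^{t_i}_{\delta}\pth{\subseq{P}{i}}$ and $h^{t_i}_{\delta}\pth{\subseq{Q}{i}}$ coincide. By \lemref{dfd:a2} applied to each pair of subsequences, there exists a traversal $T_i$ of $\subseq{P}{i}$ and $\subseq{Q}{i}$ with cost at most $\sqrt{d}\,\delta$. Concatenating the $T_i$ in order yields a valid traversal $T$ of the whole curves $P$ and $Q$ whose cost is at most $\sqrt{d}\,\delta = cr$.

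The key step is to argue that this concatenated traversal $T$ is a $w\ell$-speed traversal, so that its existence contradicts the hypothesis $\distSFrx{P}{Q}{w\ell} > cr$. For this I would bound the degree of any vertex in the bipartite graph representing $T$. Since the random lengths $r_{P,i}$ and $r_{Q,i}$ lie in $[1, w\ell]$, each subsequence $\subseq{P}{i}$ and $\subseq{Q}{i}$ has length at most $w\ell$. A vertex of $P$ in $\subseq{P}{i}$ can only be paired, within $T_i$, with vertices of $\subseq{Q}{i}$, of which there are at most $w\ell$; and because the subsequences are non-overlapping, this vertex appears in exactly one $T_i$, so its total degree in $T$ is at most $w\ell$. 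The same reasoning applies symmetrically to vertices of $Q$. Hence $T$ is a $w\ell$-speed traversal of cost at most $cr$, contradicting the assumption, and therefore the collision probability is zero.

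The main obstacle is making the concatenation argument fully rigorous at the subsequence boundaries: consecutive subsequences $\subseq{P}{i}$ and $\subseq{P}{i+1}$ share no vertices (unlike the overlapping partition of \secref{improved:all}), so one must check that the endpoints of $T_i$ and the startpoints of $T_{i+1}$ glue into a valid traversal of $P$ and $Q$ in the sense of the traversal axioms (i), (ii), (iii) from \secref{prelim} — in particular that no index is skipped between $s_{P,i}$ and $s_{P,i}+1$. This holds because $\subseq{P}{i}$ ends at $p_{s_{P,i}}$ and $\subseq{P}{i+1}$ begins at $p_{s_{P,i}+1}$, so the concatenation advances each index by exactly one across the seam, and the analogous statement holds on $Q$; the degree bound is then unaffected since the seam vertices belong to a single subsequence each. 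Once this boundary bookkeeping is in place, the contradiction follows immediately and the lemma is proved.
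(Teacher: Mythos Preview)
Your proposal is correct and follows essentially the same route as the paper's proof: assume a collision, apply \lemref{dfd:a2} to each pair of subsequences, observe that each subsequence has at most $w\ell$ vertices so the resulting traversals are $w\ell$-speed, and concatenate to obtain a $w\ell$-speed traversal of cost at most $\sqrt{d}\,\delta = cr$, contradicting the hypothesis. Your explicit treatment of the seam between consecutive non-overlapping subsequences (both indices advance by exactly one) is a detail the paper leaves implicit, but the argument is otherwise identical.
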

\begin{proof}
By hypothesis, the two curves $P$ and $Q$ have $w\ell$-speed discrte \Frechet distance larger than $cr$, which implies there cannot be $w\ell$-speed traversal with cost smaller than or equal to $cr$. 
Assume that  $P$ and $Q$ collide under the described hashing scheme; then both curves have been split into $K$ subsequence and the hash values of $P_i$ and $Q_i$, for each $1\leq i \leq K$ collide.
By \lemref{dfd:a2}, there exists a traversal of cost at most $\sqrt{d}  \delta$ between $P_i$ and $Q_i$, for each $i$. Moreover, this traversal is a $w\ell$-traversal since each subsequence contains at most $w\ell$ nodes.
This however implies that there exist a $w\ell$-speed traversal of $P$ and $Q$ of cost at most $\sqrt{d}  \delta = cr$, which is a contradiction.
Therefore, two curves with $w\ell$-speed \Frechet distance cannot collide.
\end{proof}

\begin{theorem}\thmlab{speed}
Let $P, Q \in \CurveSet$ two curves with $m_1$ and $m_2$ points,
respectively and let $m=\min\{m_1,m_2\}$. Let $\ell\geq 1$ be an arbitrary integer, $\delta= 4dr\ell$, and
$c=4d^{\frac{3}{2}} \ell$.
Then, it holds that:
\begin{compactenum}[(i)]
\item if $\distSFr{P}{Q} < r$, then 
$\Prob{g^{t, r_P}_{w,\delta,\ell}(P)=g^{t, r_Q}_{w,\delta,\ell}(Q)}{\Hash_{w,\delta,\ell}^{\texttt{S}}}
> \pth{{1}/{\sqrt{2}w \ell}}^{2m/\ell}$;
\item if $\distSFrx{P}{Q}{w\ell} > cr$,  then 
$\Prob{g^{t, r_P}_{w,\delta,\ell}(P)=g^{t, r_Q}_{w,\delta,\ell}(Q)}{\Hash_{w,\delta,\ell}^{\texttt{S}}}= 0.$
\end{compactenum}
\end{theorem}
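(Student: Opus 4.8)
The plan is to obtain Theorem~\ref{theo:speed} as an immediate corollary of the two lemmas stated just above it, in exactly the same way that Theorem~\ref{theo:anchored} was derived from Lemmas~\ref{lem:anchored_near} and~\ref{lem:anchored_far}. With the parameter choice $\delta = 4dr\ell$, part~(i) is verbatim the conclusion of Lemma~\ref{lem:speed_near}: if $\distSFr{P}{Q} < r$ then $\Prob{g^{t, r_P}_{w,\delta,\ell}(P)=g^{t, r_Q}_{w,\delta,\ell}(Q)}{\Hash_{w,\delta,\ell}^{\texttt{S}}} > \pth{1/(\sqrt{2}w\ell)}^{2m/\ell}$. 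Adding the choice $c = 4d^{\frac{3}{2}}\ell$, part~(ii) is verbatim the conclusion of Lemma~\ref{lem:speed_far}: if $\distSFrx{P}{Q}{w\ell} > cr$ then the collision probability is $0$. So the theorem itself requires no argument beyond citing these two lemmas; all the real work lives in them (and ultimately in the basic-scheme Lemmas~\ref{lem:dfd:a1} and~\ref{lem:dfd:a2}).

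For context, here is how I would drive the two lemmas if a self-contained argument were wanted. For the lower bound, I would fix an optimal $w$-speed traversal $T$ of $P$ and $Q$, decompose it into its star components via Lemma~\ref{lem:components}, and group consecutive components into $\lceil v/\ell\rceil$ blocks of $\ell$ components each. Since $T$ has speed $w$, one block spans at most $w\ell$ vertices on each curve, so the random subsequence lengths drawn from $[1,w\ell]$ can place the $i$-th cut of both $P$ and $Q$ exactly at the block boundary with probability at least $1/(w\ell)^2$ (conditioned on earlier cuts being correct). Conditioned on the partitions being aligned this way, each pair $\subseq{P}{i},\subseq{Q}{i}$ is a union of at most $\ell$ full components of $T$, so its \Frechet distance is still $< r$, and Lemma~\ref{lem:dfd:a1} with $\delta = 4dr\ell$ gives per-subsequence collision probability $\ge 1/2$; bundling the $1/(w\ell)^2$ and $1/2$ factors into $1/(\sqrt{2}w\ell)^{2}$ per block, over at most $m/\ell$ blocks (plus the $m \bmod \ell$ correction in the last block, as in Lemma~\ref{lem:anchored_near}) yields the stated bound. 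For zero false positives, a collision forces $h^{t_i}_\delta(\subseq{P}{i})=h^{t_i}_\delta(\subseq{Q}{i})$ for every $i$, so Lemma~\ref{lem:dfd:a2} supplies a traversal of cost $\le \sqrt{d}\,\delta$ on each subsequence pair; concatenating them gives a traversal of $P,Q$ of the same cost, and since every subsequence holds at most $w\ell$ vertices this combined traversal has speed $\le w\ell$, contradicting $\distSFrx{P}{Q}{w\ell} > cr = \sqrt{d}\,\delta$.

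The one genuinely delicate point — and the only place the speed analysis diverges from the anchored one — is the accounting for the event that the random partition does not cut through any component (the event $E_{1,i}$ in Lemma~\ref{lem:speed_near}): one must check that $\ell$ consecutive speed-$w$ components occupy at most $w\ell$ vertices on each curve, so that the support $[1,w\ell]$ of the random lengths is simultaneously \emph{wide enough} to realize the correct cut (needed for part~(i)) and the resulting subsequences are \emph{short enough} that no subsequence ever accumulates more than $w\ell$ vertices (needed for the $w\ell$-speed bound in part~(ii)). Once that bound on the number of vertices per block is in hand, the rest is bookkeeping: multiplying the independent per-block factors, handling the short final subsequence, and substituting $\delta = 4dr\ell$ and $c = 4d^{\frac{3}{2}}\ell$ into Lemmas~\ref{lem:dfd:a1} and~\ref{lem:dfd:a2}.
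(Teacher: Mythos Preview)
Your proposal is correct and matches the paper's own proof exactly: the theorem is obtained by directly citing Lemmas~\ref{lem:speed_near} and~\ref{lem:speed_far}, with no additional argument. Your supplementary sketch of how those lemmas are proved also tracks the paper's reasoning (including the key observation that $\ell$ consecutive speed-$w$ components span at most $w\ell$ vertices, which is precisely where the speed case differs from the anchored case).
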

\begin{proof}
The proof follows from Lemmas~\ref{lem:speed_near} and~\ref{lem:speed_far}.
\end{proof}

\section{Extensions to dynamic time warping}
\seclab{ex:dtw}
All our schemes can be applied to DTW without any algorithmic change, and in this section we  analyze some of them.
We first investigate in \secref{basic:dtw} the basic scheme  in \secref{basic} for this distance.
Then, we provide a few insights on DTW with  constrained alignments in \secref{constrained_dtw}. 
We do not analyze the techniques proposed in Sections~\ref{sec:constant} and~\ref{sec:improve} since  they have the same linear approximation of the basic LSH, and---in contrast to our previous results for the \Frechet distance---do not provide a sublinear approximation for DTW.

\subsection{Analysis of the basic LSH}
\seclab{basic:dtw}
\begin{lemma}\lemlab{dtwd:a1}
Let $P, Q \in \Curves{d}$ be two curves with $m_1$ and $m_2$ points, respectively.
For any $\delta\geq 0$, it holds that
\begin{align*}
\Prob{h^t_{\delta}(P) = h^t_{\delta}(Q)}{\Hash^{\texttt{L}}_{\delta}}  \geq 1- \pth{d \cdot \frac{\distDTW{P}{Q} }{\delta}   }.
\end{align*}
\end{lemma}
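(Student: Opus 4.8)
The plan is to mirror the proof of \lemref{dfd:a1}, replacing the max-based \Frechet cost with the sum-based DTW cost and observing that the union bound over components can be replaced by a single bound that exploits the summability of the edge costs. First I would fix an optimal traversal $T$ of $P$ and $Q$ with respect to $\distDTW{P}{Q}$, and invoke \lemref{components} to assume that $T$ decomposes into $\ell$ disconnected star components $c_1,\dots,c_\ell$. For the $k$-th component, let $v^{(k)}$ be its center vertex and let $d_k$ be the \emph{sum} of the distances from $v^{(k)}$ to the other vertices of that component, so that $\sum_{k=1}^{\ell} d_k = \distDTW{P}{Q}$ (each pair in $T$ contributes its distance exactly once, and the center contributes distance $0$ to itself).

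Next I would bound the probability that component $k$ fails to be snapped to a single grid point. As in \lemref{dfd:a1}, component $k$ survives unless some paired vertex is separated from $v^{(k)}$ along some coordinate by the random shift $t$; the probability that a particular vertex at distance $\rho$ from $v^{(k)}$ gets separated along a fixed coordinate is at most $\rho/\delta$ (indeed, even using the per-coordinate displacement, it is at most $|v^{(k)}_a - u_a|/\delta \leq \rho/\delta$ for coordinate $a$). The key saving over \lemref{dfd:a1} is that one should \emph{not} pull out a factor of $d$ times the max distance and multiply by the number of vertices; instead, summing $|v^{(k)}_a - u_a|/\delta$ over all $d$ coordinates $a$ and over all non-center vertices $u$ of component $k$ gives a bound of $\bigl(\sum_{a}\sum_{u}|v^{(k)}_a-u_a|\bigr)/\delta$, and since $\sum_a |v^{(k)}_a - u_a| = \|v^{(k)}-u\|_1 \le \sqrt{d}\,\|v^{(k)}-u\|_2$ \emph{would} lose a $\sqrt d$; the clean route is to note the stated bound only carries a single factor of $d$, so I would instead bound $\sum_a |v^{(k)}_a - u_a| \le d \cdot \|v^{(k)} - u\|$ is too lossy and instead argue coordinate-wise: for each coordinate $a$, $\Prob{E_k^{(a)}}{} \le \bigl(\sum_u |v^{(k)}_a - u_a|\bigr)/\delta$. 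Then a single union bound over all components and all $d$ coordinates yields
\begin{align*}
\Prob{h^t_{\delta}(P)\neq h^t_{\delta}(Q)}{} \le \sum_{a=1}^{d}\sum_{k=1}^{\ell} \frac{\sum_{u \in c_k}|v^{(k)}_a - u_a|}{\delta} \le \frac{d}{\delta}\sum_{k=1}^{\ell}\sum_{u \in c_k}\|v^{(k)}-u\| = \frac{d}{\delta}\,\distDTW{P}{Q},
\end{align*}
where the last equality uses $\sum_a |v^{(k)}_a-u_a| = \|v^{(k)}-u\|_1$ — and here I realize the clean statement must be using the $\ell_1$ norm inside DTW or absorbing a constant; I would simply carry the bound $\sum_a|v^{(k)}_a - u_a| \le \|v^{(k)}-u\|$ replaced by the honest inequality and present the final factor as $d$, matching the lemma, since over the whole curve each traversal pair is counted once per coordinate and the total coordinate-wise separation mass is exactly what $\distDTW{}{}$ measures up to the norm used.

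The main obstacle is getting the constant exactly right: the proof of \lemref{dfd:a1} deliberately used the crude bound ``$2dm \cdot \distFr{P}{Q}/\delta$'' because with the max-cost there is no summable structure, but for DTW the improvement comes precisely from \emph{not} applying a union bound over the $m$ vertices and \emph{not} doubling — so I must be careful that (a) the per-coordinate separation probability is $|v^{(k)}_a - u_a|/\delta$ (not $2|v^{(k)}_a - u_a|/\delta$), which holds because the shift $t$ is uniform on an interval of length $\delta$ and for two fixed points the probability a grid line falls strictly between their $a$-th coordinates is exactly their coordinate distance divided by $\delta$ (the factor $2$ in \lemref{dfd:a1} was an artifact of bounding the distance to the center by the \Frechet distance on both sides), and (b) the telescoping $\sum_{k}\sum_{u\in c_k}\|v^{(k)}-u\| = \distDTW{P}{Q}$ is exact because, by \lemref{components}, $T$ is a disjoint union of stars and every traversal pair $(i_k,j_k)\in T$ appears in exactly one star as a (center, leaf) edge. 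Once these two points are pinned down, plugging into a single union bound over the $d$ coordinates gives the claimed $1 - d\,\distDTW{P}{Q}/\delta$ lower bound, completing the proof.
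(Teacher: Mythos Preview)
Your argument is correct in substance, but you have made it harder than necessary by routing through the star decomposition of \lemref{components}. The paper's proof is shorter and sidesteps all of the norm-juggling that tripped you up: it simply iterates over the \emph{pairs} $(i_k,j_k)$ of an optimal DTW traversal $T$ (not over components), lets $E_k$ be the event that $p_{i_k}$ and $q_{j_k}$ snap to different grid cells, bounds $\Prob{E_k}{}\le d\cdot\|p_{i_k}-q_{j_k}\|/\delta$ via a union bound over the $d$ coordinates (using the crude per-coordinate estimate $|p_{i_k,a}-q_{j_k,a}|\le\|p_{i_k}-q_{j_k}\|$), and then sums over all $k$ to obtain exactly $d\cdot\distDTW{P}{Q}/\delta$. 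No star decomposition, no $\ell_1$-versus-$\ell_2$ discussion, no factor-of-$2$ bookkeeping.

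Your detour through stars ultimately computes the same sum---every pair in $T$ is a center--leaf edge of exactly one star, so $\sum_k\sum_{u\in c_k}\|v^{(k)}-u\|=\sum_{(i,j)\in T}\|p_i-q_j\|=\distDTW{P}{Q}$---but the exposition suffers for it. In particular, the passage where you worry about $\|\cdot\|_1$ versus $\sqrt d\,\|\cdot\|_2$ is an artifact of first summing coordinate differences into an $\ell_1$ norm and then trying to compare; the paper never forms $\|\cdot\|_1$ at all, because it bounds each coordinate difference by the Euclidean distance \emph{before} summing over coordinates, paying a clean factor of $d$. Your observation that the factor $2$ from \lemref{dfd:a1} disappears here is correct, but the reason is simply that one is bounding the separation probability of a single pair of points (extent $\|p-q\|$ in each coordinate) rather than of an entire star (diameter up to $2\distFr{P}{Q}$).
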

\begin{proof}
Let $T$ be an optimal traversal of $P$ and $Q$ with respect to their DTW distance.
Let $\ell=|T|$ and denote with $d_k$ the distance $\|p_{i_k}-q_{j_k}\|$ for $1\leq k \leq \ell$.
We have that $\distDTW{P}{Q}=\sum_{1\leq k \leq \ell} d_k$.

Now, we bound the probability that $p_{i_k}$ and
$q_{j_k}$ for some fixed $(i_k,j_{k}) \in T$ are not snapped to the same grid
point. As in the proof of \lemref{dfd:a1}, two points are separated, if and only if they are separated
in at least one of their coordinate dimensions.  

Denote with $E_k$ the event that the pair is separated by the random shift $t$,
where $t$ is uniformly distributed in $[0,\delta)^d$.  By a union bound over the
pairs in $T$, we have that the probability of $P$ and $Q$ not being hashed to
the same curve is bounded by 
\begin{align*}
\hspace{-1em}
\Prob{\bigcup_{1\leq k \leq \ell}  E_k}{}
\leq \sum_{1\leq k \leq \ell} \Prob{E_k}{} 
\leq \sum_{1\leq k \leq \ell} d \cdot \frac{\|p_{i_k}-q_{j_k}\|}{\delta}
= \sum_{1\leq k \leq \ell} d \cdot \frac{d_k}{\delta} 
= d\cdot\frac{\distDTW{P}{Q}}{\delta} 
\end{align*}
and the lemma follows.
\end{proof}

\begin{lemma}\lemlab{dtwd:a2}
Let $P, Q \in \Curves{d}$ be two curves with $m_1$ and $m_2$ points, respectively,
and let $M=\max\{m_1,m_2\}$ and $\delta \geq 0$.
If there exists a value of $t \in [0,\delta)^d$ such that 
$h^t_{\delta}(P)=h^t_{\delta}(Q)$, then it holds that $\distDTW{P}{Q} \leq 2 M \sqrt{d} \cdot \delta.$
\end{lemma}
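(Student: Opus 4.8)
The plan is to follow the skeleton of \lemref{dfd:a2}, but the three-term chaining used there is not available: $\distDTW{\cdot}{\cdot}$ violates the triangle inequality, so we cannot simply bound $\distDTW{P}{Q}$ by $\distDTW{P}{R}+\distDTW{R}{Q}$ where $R$ is the common hash value. Instead I would construct an explicit traversal of $P$ and $Q$ out of the two snapping maps and bound its cost by counting the number of pairs it uses, the key point being that this number is linear in $M$ rather than quadratic.

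Concretely, let $R=h^t_{\delta}(P)=h^t_{\delta}(Q)$, a curve with $|R|\le\min\{m_1,m_2\}$ vertices. Snapping to the shifted grid moves every vertex by at most $\frac{\sqrt{d}}{2}\delta$, and deleting consecutive duplicates only identifies vertices, so each $p_i$ corresponds to a vertex $r_{\sigma(i)}$ of $R$ with $\|p_i-r_{\sigma(i)}\|\le\frac{\sqrt{d}}{2}\delta$, and each $q_j$ to $r_{\tau(j)}$ with $\|q_j-r_{\tau(j)}\|\le\frac{\sqrt{d}}{2}\delta$. The induced index maps $\sigma\colon[m_1]\to[|R|]$ and $\tau\colon[m_2]\to[|R|]$ are nondecreasing and surjective, so for every $k$ the preimages $A_k=\sigma^{-1}(k)$ and $B_k=\tau^{-1}(k)$ are nonempty contiguous blocks of indices.

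Then I would assemble a traversal of $P$ and $Q$ blockwise. For a fixed $k$, every $p_i$ with $i\in A_k$ and every $q_j$ with $j\in B_k$ lies within $\frac{\sqrt{d}}{2}\delta$ of $r_k$, hence within $\sqrt{d}\,\delta$ of one another; on the subcurves indexed by $A_k$ and $B_k$ I would use a star sub-traversal (pair the first vertex of the $P$-block with every vertex of the $Q$-block, then pair every remaining vertex of the $P$-block with the last vertex of the $Q$-block), which uses exactly $|A_k|+|B_k|-1$ pairs, each of cost at most $\sqrt{d}\,\delta$, and runs from $(\min A_k,\min B_k)$ to $(\max A_k,\max B_k)$. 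Concatenating these sub-traversals for $k=1,\dots,|R|$ gives a legal traversal of $P$ and $Q$: the junctions between consecutive blocks are diagonal steps, and the overall endpoints are $(1,1)$ and $(m_1,m_2)$. Its DTW cost is
\begin{align*}
\sum_{k=1}^{|R|}\pth{|A_k|+|B_k|-1}\sqrt{d}\,\delta = \pth{m_1+m_2-|R|}\sqrt{d}\,\delta \le 2M\sqrt{d}\,\delta ,
\end{align*}
and since $\distDTW{P}{Q}$ is at most the cost of this particular traversal, the lemma follows (in fact one even gets the slightly sharper $(m_1+m_2-1)\sqrt{d}\,\delta$).

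The main obstacle is exactly the failure of the triangle inequality, which forces the explicit construction above instead of a one-line chaining argument. The saving observation is that the combined traversal has length $m_1+m_2-|R|$, linear in $M$, so the per-pair rounding error $\sqrt{d}\,\delta$ only accumulates to $\BO{M\delta}$; the one place that needs genuine care is checking that the blockwise star sub-traversals, together with the diagonal junction steps, glue into a single valid traversal.
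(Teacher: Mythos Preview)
Your proposal is correct and follows essentially the same approach as the paper: both construct an explicit block-by-block L-shaped (star) traversal of $P$ and $Q$ using the preimages of the common hash curve $R$, then bound its DTW cost by counting pairs. Your exposition is a bit cleaner via the index maps $\sigma,\tau$ and gives the slightly sharper count $m_1+m_2-|R|$ rather than the paper's $m_1+m_2$, but the underlying construction and the key observation that the traversal length is linear in $M$ are identical.
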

\begin{proof}
The DTW distance does not satisfy the triangle inequality,
however we can use a similar argument as in the proof of \lemref{dfd:a2}.
Assume that $h^t_{\delta}(P)=h^t_{\delta}(Q)$ is true from some $t \in [0,\delta)^d$
and denote $\widehat{P}=h^t_{\delta}(P)$ and $\widehat{Q}=h^t_{\delta}(Q)$.
Let  $\widehat{P}= \widehat{p_1},\dots,\widehat{p_s}$ and $\widehat{Q} =
\widehat{q_1},\dots,\widehat{q_s}$.  We have for $0 \leq r \leq s$ that
$\|\widehat{p_r}-\widehat{q_r}\|=0$.  Recall that $\widehat{P}$  (and
respectively, $\widehat{Q}$) was generated by snapping each point of $P$ to the
grid $\widehat{G}^t_{\delta}$ and by contracting each sequence of identical
vertices to one copy of the same vertex. 
We replace each $\widehat{p}_r$ by the original sequence of identical vertices 
$\widehat{p}_r^{(1)},\dots, \widehat{p}_r^{(a)}$ (and we do the same for each
$\widehat{q}_r$, denoting its sequence of identical vertices by
$\widehat{q}_r^{(1)},\dots, \widehat{q}_r^{(b)}$). Now, consider a traversal $T$ of
the resulting sequences that pairs each $\widehat{p}_r^{(1)}$ with
$\widehat{q}_r^{(1)}$. Furthermore, if $a>2$, then $T$ pairs each 
$\widehat{p}_r^{(i)}$ with $\widehat{q}_r^{(1)}$ and if $b>2$, then $T$ pairs
$\widehat{p}_r^{(a)}$ with each $\widehat{q}_r^{(i)}$. Finally, $T$ pairs
$\widehat{p}_r^{(a)}$ with  $\widehat{q}_r^{(b)}$, for each $r$. The resulting
traversal $T$ can also be applied to the original curves $P$ and $Q$. 
By the construction of $\widehat{P}$ and $\widehat{Q}$, any two paired vertices
have distance at most  $\sqrt{d}\cdot \delta$, since they snapped to the same
grid point.  It follows that
\[
\distDTW{P}{Q} = \min_{T' \in \TraversalSet} \sum_{(i_k,j_k) \in T'} \| p_{i_k} - q_{j_k}\|
 \leq \sum_{(i_k,j_k) \in T} \| p_{i_k} - q_{j_k}\|
 \leq  \sum_{(i_k,j_k) \in T} {\sqrt{d}}\cdot\delta  
 \leq  2  M {\sqrt{d}}  \delta, 
\]
where the last step can be obtained by observing that $|T| \leq m_1+m_2 \leq 2M$.
\end{proof}

\begin{theorem}\thmlab{basic:dtwd}
Let $P, Q \in \Curves{d}$ be two curves with $m_1$ and $m_2$ points, respectively, and let $M=\max\{m_1,m_2\}$, $\delta= 2dr$ and let $c=4d^{\frac{3}{2}}M$.
\begin{compactenum}[(i)]
\item if $\distDTW{P}{Q} < r$, then $\Prob{h^t_{\delta}(P) = h^t_{\delta}(Q)}{\Hash^{\texttt{L}}_{\delta}}  > \frac{1}{2}$;
\item if $\distDTW{P}{Q} > cr$, then $\Prob{h^t_{\delta}(P) = h^t_{\delta}(Q)}{\Hash^{\texttt{L}}_{\delta}}  = 0.$
\end{compactenum}
\end{theorem}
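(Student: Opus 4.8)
The statement is \thmref{basic:dtwd}, and the plan is to simply plug in the two lemmas about the basic LSH for the DTW distance, namely \lemref{dtwd:a1} and \lemref{dtwd:a2}, in exactly the same way \thmref{basic:dfd} was obtained from \lemref{dfd:a1} and \lemref{dfd:a2}. The two claims are essentially immediate once the right value of $\delta$ is substituted.

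\textbf{Claim (i): near curves collide with probability more than $1/2$.} Here I would use $\delta = 2dr$ and assume $\distDTW{P}{Q} < r$. Plugging directly into the bound of \lemref{dtwd:a1},
\begin{align*}
\Prob{h^t_{\delta}(P) = h^t_{\delta}(Q)}{\Hash^{\texttt{L}}_{\delta}} \geq 1 - d\cdot\frac{\distDTW{P}{Q}}{\delta} = 1 - \frac{\distDTW{P}{Q}}{2r} > 1 - \frac{1}{2} = \frac{1}{2}.
\end{align*}

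\textbf{Claim (ii): far curves never collide.} Here I would use \lemref{dtwd:a2}, which states that whenever $h^t_{\delta}(P) = h^t_{\delta}(Q)$ for some shift $t$, we must have $\distDTW{P}{Q} \leq 2M\sqrt{d}\cdot\delta$. With $\delta = 2dr$ this upper bound equals $2M\sqrt{d}\cdot 2dr = 4d^{3/2}Mr = cr$. Hence if $\distDTW{P}{Q} > cr = 4d^{3/2}Mr$, no shift $t$ can cause a collision, so $\Prob{h^t_{\delta}(P) = h^t_{\delta}(Q)}{\Hash^{\texttt{L}}_{\delta}} = 0$.

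\textbf{Main obstacle.} There is essentially no obstacle at the level of this theorem: all the work is in the two preceding lemmas. The one point requiring a little care is bookkeeping the constants so that the choices $\delta = 2dr$ and $c = 4d^{3/2}M$ are mutually consistent between the two parts — in particular that $2M\sqrt{d}\cdot\delta$ with $\delta = 2dr$ matches exactly $cr$ — which the computation above confirms.
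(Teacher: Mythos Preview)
Your proof is correct and matches the paper's approach exactly: the paper's proof simply states that the theorem follows by plugging the choice of $\delta$ into Lemmas~\ref{lem:dtwd:a1} and~\ref{lem:dtwd:a2}, which is precisely what you carry out in detail.
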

\begin{proof}
The proof follows by plugging in the bounds of $\delta$ in Lemmas~\ref{lem:dtwd:a1} and~\ref{lem:dtwd:a2}.
\end{proof}

\subsection{Handling constrained alignments}
\seclab{constrained_dtw}
The schemes in \secref{constraints} for  $w$-anchored/speed traversals automatically apply to DTW distance, with the same  collision probabilities stated in Theorems~\ref{theo:anchored} and~\ref{theo:speed}. 
However, the approximation factor is $4d^{3/2} (m_1+m_2)$, where $m_1$ and $m_2$ are curve lengths. 
The claim follows by mimicking the proofs for the \Frechet distance and use the bounds in \thmref{basic:dtwd}.
We provide the analysis only for the $w$-anchored DTW distance, being the one for $w$-speed DTW essentially the same.

\begin{lemma}\lemlab{anchored_near_DTW}
Let $P, Q \in \CurveSet$ be two curves with $m_1$ and $m_2$ points,
respectively and let $m=\min\{m_1,m_2\}$. Let $w$ be the traversal width, $\ell\geq 1$ be an arbitrary integer, $\delta= 2dr$.
If $\distADTW{P}{Q} < r$, then
$\Prob{g^{t, r_P}_{w,\delta,\ell}(P)=g^{t, r_Q}_{w,\delta,\ell}(Q)}{\Hash_{w,\delta,\ell}^{\texttt{A}}}
> \pth{{1}/{\sqrt{2}w }}^{2m/\ell}$.
\end{lemma}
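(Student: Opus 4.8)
The plan is to replay the proof of \lemref{anchored_near} almost verbatim, substituting the DTW guarantee of \lemref{dtwd:a1} for the \Frechet guarantee of \lemref{dfd:a1}. Fix an optimal $w$-anchored DTW traversal $T$ of $P$ and $Q$. By \lemref{components} we may assume $T$ consists of at most $m$ star-shaped components; since \lemref{components} only deletes redundant middle pairs, the $w$-anchored property of $T$ is preserved. As in \lemref{anchored_near}, for each $i \le K=\lceil m/\ell\rceil$ define the event $E_{1,i}$ that the perturbed partition boundaries of $P$ and $Q$ land exactly on the rightmost vertices of the components straddling the nominal boundary $i\ell$ (so that no component is split across subsequences $i$ and $i+1$), and the event $E_{2,i}$ that $h^{t_i}_{\delta}\pth{\subseq{P}{i}}=h^{t_i}_{\delta}\pth{\subseq{Q}{i}}$; let $E_{i}=\bigcap_{j\le i}(E_{1,j}\cap E_{2,j})$. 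The bounds $\Prob{E_{1,i}\mid E_{i-1}}{}\ge 1/w^2$ and $\Prob{E_{1,K}\mid E_{K-1}}{}=1$ depend only on the random partition, not on the metric, so they carry over unchanged.

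The single step that must be redone is the estimate of $\Prob{E_{2,i}\mid E_{1,i}}{}$. Conditioned on $E_{1,i}$, the restriction of $T$ to the pairs lying inside the $i$-th subsequences is a valid traversal of $\subseq{P}{i}$ and $\subseq{Q}{i}$, and because the components of $T$ partition its pairs we get $\sum_{i}\distDTW{\subseq{P}{i}}{\subseq{Q}{i}}\le\distADTW{P}{Q}$; in particular each term satisfies $\distDTW{\subseq{P}{i}}{\subseq{Q}{i}}\le\distADTW{P}{Q}<r$. Applying \lemref{dtwd:a1} to the $i$-th pair of subsequences with $\delta=2dr$ then gives
\begin{align*}
\Prob{E_{2,i}\mid E_{1,i}}{}\ \ge\ 1-\pth{d\cdot\frac{\distDTW{\subseq{P}{i}}{\subseq{Q}{i}}}{\delta}}\ \ge\ 1-\frac{dr}{2dr}\ =\ \frac12,
\end{align*}
which is exactly the value used in \lemref{anchored_near}; for the last subsequence, whose shorter side has $x=m\bmod\ell$ vertices, the same line yields $\Prob{E_{2,K}\mid E_{1,K}}{}\ge 1-x/(2\ell)\ge 2^{-x/\ell}$.

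Since the per-stage success probabilities are now identical to those in \lemref{anchored_near} --- namely at least $1/(2w^2)=\pth{\sqrt2\, w}^{-2}$ for $i<K$ and $2^{-x/\ell}$ for $i=K$ --- multiplying over the $K=\lceil m/\ell\rceil$ stages and simplifying the exponents exactly as there yields the claimed bound $\pth{1/(\sqrt2\, w)}^{2m/\ell}$. The only point that genuinely needs DTW-specific care is the subadditivity observation $\sum_i\distDTW{\subseq{P}{i}}{\subseq{Q}{i}}\le\distADTW{P}{Q}$: this is what keeps each subsequence distance below $r$ and hence lets us take the smaller grid width $\delta=2dr$ here, in contrast to the \Frechet case where \lemref{dfd:a1} carries an extra factor equal to the subsequence length and forces $\delta=4dr\ell$.
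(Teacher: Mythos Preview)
Your proposal is correct and follows essentially the same route as the paper: replay the proof of \lemref{anchored_near}, replacing \lemref{dfd:a1} by \lemref{dtwd:a1} and observing that the DTW collision bound carries no length factor, so $\delta=2dr$ already gives $\Prob{E_{2,i}\mid E_{1,i}}{}\ge 1/2$. One cosmetic slip: for the last subsequence, ``the same line'' from \lemref{dtwd:a1} yields $\ge 1/2$ directly (there is no $x$ in that bound), not $1-x/(2\ell)$; this is harmless since $1/2\ge 2^{-x/\ell}$ for $0\le x\le \ell$, so the final product still meets the stated bound.
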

\begin{proof}
The proof mimics the one for the  $w$-anchored \Frechet distance in \lemref{anchored_near}. 
The only  difference is in the value of $\delta$ required to get $\Prob{E_{2,i}|~E_{1,i}}{}\geq 1/2$, since, by \lemref{dtwd:a1},  the  hash values of the $i$-th subsequences of $P$ and $Q$ collide with probability at least $ 1- \pth{ d  \cdot \frac{\distFr{\subseq{P}{i}}{\subseq{Q}{i}} }{\delta}}$, which is independent of the number of components. 
\end{proof}

\begin{lemma}\lemlab{anchored_far_DTW}
Let $P, Q \in \CurveSet$ be two curves with $m_1$ and $m_2$ points,
respectively.   Let $w$ be the traversal width, $\ell\geq 1$ be an arbitrary integer, $\delta= 2dr$, and
$c=4d^{\frac{3}{2}} (m_1+m_2)$.
If $\distADTWx{P}{Q}{(w+2(\ell-1))} > cr$, then
$\Prob{g^{t, r_P}_{w,\delta,\ell}(P)=g^{t, r_Q}_{w,\delta,\ell}(Q)}{\Hash_{w,\delta,\ell}^{\texttt{A}}} = 0$.
\end{lemma}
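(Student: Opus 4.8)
The plan is to mirror the proof of \lemref{anchored_far} almost verbatim, replacing the appeal to \lemref{dfd:a2} by its DTW counterpart \lemref{dtwd:a2}, and then to track how the per-subsequence cost bounds accumulate into a bound on the constrained DTW distance of the whole curves. First I would assume, for contradiction, that $P$ and $Q$ collide under $\Hash_{w,\delta,\ell}^{\texttt{A}}$. Then, as described in \secref{anchored}, both curves are split into the same number $K$ of consecutive, non-overlapping subsequences that together cover $P$ (respectively $Q$), and $h^{t_i}_{\delta}\pth{\subseq{P}{i}}=h^{t_i}_{\delta}\pth{\subseq{Q}{i}}$ holds for every $1\le i\le K$. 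Applying \lemref{dtwd:a2} to the $i$-th pair of subsequences, with its parameter $M$ replaced by $M_i:=\max\{|\subseq{P}{i}|,|\subseq{Q}{i}|\}$ and the same $\delta=2dr$, produces a traversal $T_i$ of $\subseq{P}{i}$ and $\subseq{Q}{i}$ whose DTW-cost (the sum of distances of paired vertices) is at most $2M_i\sqrt{d}\,\delta$; moreover, by the way that traversal is constructed in \lemref{dtwd:a2}, $T_i$ pairs the first vertices of the two subsequences and pairs their last vertices.

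Next I would concatenate $T_1,\dots,T_K$ in order to obtain a traversal $T$ of the full curves $P$ and $Q$. Because consecutive subsequences abut one another without overlapping, the transition from the end of $T_i$ to the start of $T_{i+1}$ advances both indices by one, which is a legal diagonal step, so $T$ satisfies all traversal properties. Its cost is $\sum_{i=1}^{K}(\text{cost of }T_i)\le 2\sqrt{d}\,\delta\sum_{i=1}^{K}M_i\le 2\sqrt{d}\,\delta\,(m_1+m_2)$, using $\sum_i M_i\le\sum_i\pth{|\subseq{P}{i}|+|\subseq{Q}{i}|}=m_1+m_2$. Substituting $\delta=2dr$, this is exactly $4d^{\frac{3}{2}}(m_1+m_2)r=cr$. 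Finally, exactly as in \lemref{anchored_far}, the anchored partitioning forces the global indices of any two vertices paired inside the $i$-th subsequence to differ by at most $w/2+\ell-1$, so $T$ is a $(w+2(\ell-1))$-anchored traversal of $P$ and $Q$. Hence $\distADTWx{P}{Q}{(w+2(\ell-1))}\le cr$, contradicting the hypothesis, and therefore the collision probability is $0$.

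The only delicate point I anticipate is the cost accounting: \lemref{dtwd:a2} loses a factor proportional to the subsequence length, so it is essential that the subsequences genuinely partition $P$ and $Q$ (rather than overlap, as in \secref{improved:all}), which is what makes the length factors $M_i$ sum to at most $m_1+m_2$ and yields precisely the claimed $c=4d^{\frac{3}{2}}(m_1+m_2)$. Verifying the seam conditions of the concatenated traversal and re-deriving the index-difference bound of the anchored partition are routine, being identical to the corresponding steps in \lemref{anchored_far}.
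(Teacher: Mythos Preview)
Your proposal is correct and follows essentially the same route as the paper: mirror \lemref{anchored_far}, replace the per-subsequence bound by \lemref{dtwd:a2}, sum the resulting costs using $\sum_i M_i\le m_1+m_2$, and plug in $\delta=2dr$ to obtain $c=4d^{3/2}(m_1+m_2)$. The paper's own proof is just a terser statement of exactly this argument (it even miscites \lemref{dfd:a2} where \lemref{dtwd:a2} is intended), so your more explicit handling of the concatenation seams and the non-overlapping partition is a faithful expansion rather than a different approach.
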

\begin{proof}
The proof is almost the same of the one for \lemref{anchored_far} and the only difference is in the approximation provided by the basic LSH.
Indeed, by \lemref{dfd:a2}, there exists a traversal of cost at most $\sqrt{d} 2 M_i \delta$ between $\subseq{P}{i}$ and $\subseq{Q}{i}$ for each $i$, where $M_i$ denote the length of the longest sequence.
However, we have $\sum_{i=1}^{K} M_i \leq m_1+m_2$ and $\delta=2dr$.
Therefore the final approximation is $c=4 d^{3/2} (m_1+m_2)$.
\end{proof}

\begin{theorem}\thmlab{anchoredDTW}
Let $P, Q \in \CurveSet$ be two curves with $m_1$ and $m_2$ points,
respectively, and let $m=\min\{m_1,m_2\}$. Let $\ell\geq 1$ be an arbitrary integer, $\delta= 2dr$, and
$c=4d^{\frac{3}{2}} (m_1+m_2)$.
Then, the above hashing scheme guarantees that:
\begin{compactenum}[(i)]
\item if $\distADTW{P}{Q} < r$, then 
$\Prob{g^{t, r_P}_{w,\delta,\ell}(P)=g^{t, r_Q}_{w,\delta,\ell}(Q)}{\Hash_{w,\delta,\ell}^{\texttt{A}}}
> \pth{{1}/{\sqrt{2}w}}^{2m/\ell}$;
\item $\distADTWx{P}{Q}{(w+2(\ell-1))} > cr$,  then 
$\Prob{g^{t, r_P}_{w,\delta,\ell}(P)=g^{t, r_Q}_{w,\delta,\ell}(Q)}{\Hash_{w,\delta,\ell}^{\texttt{A}}} = 0.$
\end{compactenum}
\end{theorem}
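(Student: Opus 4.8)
The plan is to transfer the two-lemma structure used for the $w$-anchored \Frechet distance (Lemmas~\ref{lem:anchored_near} and~\ref{lem:anchored_far}) to DTW, replacing the basic-LSH guarantees for $d_F$ by the DTW guarantees of Lemmas~\ref{lem:dtwd:a1} and~\ref{lem:dtwd:a2}; the theorem then follows by combining the near-curve bound (i) with the far-curve bound (ii), exactly as \thmref{anchored} follows from its two lemmas. Concretely, I would state the two DTW analogues (\lemref{anchored_near_DTW} and \lemref{anchored_far_DTW}) and prove them in sequence, then conclude the theorem in one line.

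For part (i) I would repeat the inductive argument of \lemref{anchored_near} down to the level of the events $E_{1,i}$ (the random perturbation of the subsequence endpoints does not cut the components of a fixed optimal $w$-anchored traversal) and $E_{2,i}$ (the basic LSH collides on the $i$-th pair of subsequences). The partition mechanism is identical to the anchored \Frechet scheme, so $\Prob{E_{1,i} \mid E_{i-1}}{} = 1/w^2$ as before. The only change is in lower-bounding $\Prob{E_{2,i} \mid E_{1,i}}{}$: instead of \lemref{dfd:a1} I invoke \lemref{dtwd:a1}, which gives collision probability at least $1 - d \cdot \distDTW{\subseq{P}{i}}{\subseq{Q}{i}} / \delta$, a bound that crucially does not grow with the number of vertices in the subsequences. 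Since restricting a $w$-anchored traversal of $P$ and $Q$ to the index ranges of $\subseq{P}{i}$ and $\subseq{Q}{i}$ yields a traversal of the two subsequences whose summed cost is at most $\distADTW{P}{Q} < r$ (a sum of nonnegative terms only shrinks under restriction), the choice $\delta = 2dr$ makes this probability at least $1/2$, and the same telescoping product as in \lemref{anchored_near} yields the claimed bound $\pth{1/(\sqrt{2}\,w)}^{2m/\ell}$.

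For part (ii) I would mirror \lemref{anchored_far}. Assuming a collision, both curves are split into $K$ non-overlapping subsequences with $h^{t_i}_{\delta}(\subseq{P}{i}) = h^{t_i}_{\delta}(\subseq{Q}{i})$ for every $i$; by \lemref{dtwd:a2} there is a traversal of $\subseq{P}{i}$ and $\subseq{Q}{i}$ of cost at most $2 M_i \sqrt{d}\, \delta$, where $M_i$ is the length of the longer of the two subsequences. Because the perturbed partition keeps the index offset between paired vertices at most $w/2 + \ell - 1$, this traversal is $(w+2(\ell-1))$-anchored, and concatenating the $K$ traversals gives a $(w+2(\ell-1))$-anchored traversal of $P$ and $Q$ of DTW cost $\sum_i 2 M_i \sqrt{d}\, \delta \le 2(m_1+m_2)\sqrt{d}\,\delta = 4 d^{3/2}(m_1+m_2)\, r = cr$, contradicting the hypothesis. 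This is precisely where the approximation degrades: unlike the \Frechet case, where the triangle inequality costs only a constant factor, the DTW accounting loses the factor $\sum_i M_i$, forcing $c = \Theta(d^{3/2}(m_1+m_2))$ rather than $c = \Theta(d^{3/2}\ell)$.

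I expect the concatenation argument in part (ii) to be the main obstacle: one has to check that gluing the per-subsequence traversals from \lemref{dtwd:a2} is legal (their endpoints agree at the shared index boundaries), that the glued traversal is still $(w+2(\ell-1))$-anchored, and that $\sum_i M_i \le m_1+m_2$ holds for a non-overlapping partition. The absence of a triangle inequality for DTW is exactly what makes \lemref{dtwd:a2}, and hence this accounting, non-trivial; once it is in place, the remaining steps are a transcription of the \Frechet proofs, with $\delta = 2dr$ replacing $\delta = 4dr\ell$ throughout.
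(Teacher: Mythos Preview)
Your proposal is correct and follows essentially the same approach as the paper: it proves the theorem via two DTW analogues of Lemmas~\ref{lem:anchored_near} and~\ref{lem:anchored_far}, replacing the basic-LSH guarantees for the \Frechet distance by Lemmas~\ref{lem:dtwd:a1} and~\ref{lem:dtwd:a2}, with $\delta=2dr$ and the accounting $\sum_i M_i \le m_1+m_2$ yielding $c=4d^{3/2}(m_1+m_2)$. In fact your write-up is more careful than the paper's, which merely says the proofs ``mimic'' the \Frechet case and notes the two modifications you spell out.
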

\begin{proof}
The  theorem follows from Lemmas~\ref{lem:anchored_near_DTW} and~\ref{lem:anchored_far_DTW}.
\end{proof}

\section{Conclusion}
\seclab{concl}
To the best of our knowledge, this is the first paper providing LSH schemes for
curves.  When applied to the near neighbor problem, our techniques improve the
state of the art for the discrete \Frechet distance~\cite{i-approxnn-02}  under
different settings, and provide the first data structure with theoretical
guarantees for DTW.  The methods presented are simple enough that they may be
practical.
We do not know if our bounds are tight. It would be interesting to know 
if lower bounds can be obtained for the studied problem and/or to improve the upper bounds.
All of the presented LSH schemes exhibit the property that no collisions happen
between far points (i.e., $\alpha_2=0$).  An open question is to understand if
it is possible to slightly increase this collision probability  
(say $\alpha_2=1/n$) to get a better approximation factor. 
Another interesting direction would be to reduce space by exploiting the independence in the
approach described in \secref{const}, or by using a multiprobe approach~\cite{Lv07}.
Finally, we remark that our results only partially extend to DTW. As such, it
is still open to get a sublinear approximation for DTW. 
We hope that our work inspires further work in one of these directions.
%
%
\subparagraph*{Acknowledgements}
The authors would like to thank Rasmus Pagh and the anonymous reviewers for useful comments. This research was initiated at the Dagstuhl Seminar 16101 "Data Structures and Advanced Models of Computation on Big Data, 2016".

\bibliographystyle{abbrv}
\bibliography{frechetlsh}

\begin{thebibliography}{10}

\bibitem{Arya2008}
S.~Arya, D.~M. Mount, A.~Vigneron, and J.~Xia.
\newblock Space-time tradeoffs for proximity searching in doubling spaces.
\newblock In {\em Proc. 16th European Symp. Algorithms (ESA)}, pages 112--123,
  2008.

\bibitem{backurs2016constant}
A.~Backurs and A.~Sidiropoulos.
\newblock Constant-distortion embeddings of {Hausdorff} metrics into
  constant-dimensional $l_p$ spaces.
\newblock In {\em Proc. 19th Workshop on Approximation Algorithms for
  Combinatorial Optimization Problems (APPROX)}, volume~60, pages 1:1--1:15,
  2016.

\bibitem{bartal2014impossible}
Y.~Bartal, L.-A. Gottlieb, and O.~Neiman.
\newblock On the impossibility of dimension reduction for doubling subsets of
  $\ell_p$.
\newblock In {\em Proc. 13th Symp. on Computational Geometry (SOCG)}, pages
  60:60--60:66, 2014.

\bibitem{b-seth-14}
K.~Bringmann.
\newblock Why walking the dog takes time: {F}r\'{e}chet distance has no
  strongly subquadratic algorithms unless seth fails.
\newblock In {\em Proc. 55th Symp. on Foundations of Computer Science (FOCS)},
  pages 661--670, 2014.

\bibitem{bmp-ackw-07}
J.~C. Brown and P.~J.~O. Miller.
\newblock Automatic classification of killer whale vocalizations using dynamic
  time warping.
\newblock {\em J. of the Acoustical Society of America}, 122(2):1201--1207,
  2007.

\bibitem{playerpaths2015}
J.~Campbell, J.~Tremblay, and C.~Verbrugge.
\newblock Clustering player paths.
\newblock In {\em Proc. 10th Int. Conf. on the Foundations of Digital Games
  (FDG)}, 2015.

\bibitem{bcg-ffq-13}
M.~de~Berg, A.~F. Cook, and J.~Gudmundsson.
\newblock Fast {F}r{\'{e}}chet queries.
\newblock {\em Comput. Geom.}, 46(6):747--755, 2013.

\bibitem{dh-jay-12}
A.~Driemel and S.~Har{-}Peled.
\newblock Jaywalking your dog: Computing the {Fr{\'{e}}chet} distance with
  shortcuts.
\newblock {\em {SIAM} J. Computing}, 42(5):1830--1866, 2013.

\bibitem{dks-clust-16}
A.~Driemel, A.~Krivo\v{s}ija, and C.~Sohler.
\newblock Clustering time series under the {F}r\'{e}chet distance.
\newblock In {\em Proc. 27th Symp. on Discrete Algorithms (SODA)}, pages
  766--785, 2016.

\bibitem{Forestier2012255}
G.~Forestier, F.~Lalys, L.~Riffaud, B.~Trelhu, and P.~Jannin.
\newblock Classification of surgical processes using dynamic time warping.
\newblock {\em J. Biomedical Informatics}, 45(2):255 -- 264, 2012.

\bibitem{gpu-clustering-frechet2015}
J.~Gudmundsson and N.~Valladares.
\newblock A {GPU} approach to subtrajectory clustering using the {F}r\'echet
  distance.
\newblock {\em IEEE Transactions on Parallel and Distributed Systems},
  26(4):924--937, 2015.

\bibitem{gupta2003bounded}
A.~Gupta, R.~Krauthgamer, and J.~R. Lee.
\newblock Bounded geometries, fractals, and low-distortion embeddings.
\newblock In {\em Proc. 44th Symp. Found. Comp. Science (FOCS)}, pages
  534--543, 2003.

\bibitem{him-12}
S.~Har{-}Peled, P.~Indyk, and R.~Motwani.
\newblock Approximate nearest neighbor: Towards removing the curse of
  dimensionality.
\newblock {\em Theory of Computing}, 8(1):321--350, 2012.

\bibitem{1013101}
B.~Huang and W.~Kinsner.
\newblock {ECG} frame classification using dynamic time warping.
\newblock In {\em Proc. Canadian Conf. on Electrical and Computer Engineering},
  volume~2, pages 1105--1110, 2002.

\bibitem{i-nes-98}
P.~Indyk.
\newblock On approximate nearest neighbors in non-euclidean spaces.
\newblock In {\em Proc. 39th Symp. on Foundations of Computer Science}, pages
  148--155, 1998.

\bibitem{Indyk01}
P.~Indyk.
\newblock On approximate nearest neighbors under $l_{\infty}$ norm.
\newblock {\em J. Computer and System Sciences}, 63(4):627 -- 638, 2001.

\bibitem{i-approxnn-02}
P.~Indyk.
\newblock Approximate nearest neighbor algorithms for {F}r\'echet distance via
  product metrics.
\newblock In {\em Proc. 18th Symp. on Computational Geometry (SOCG)}, pages
  102--106, 2002.

\bibitem{indmat04}
P.~Indyk and J.~Matou\v{s}ek.
\newblock Low-distortion embeddings of finite metric spaces.
\newblock In {\em Handbook of Discrete and Computational Geometry}, pages
  177--196. CRC Press, 2004.

\bibitem{indyk1998approximate}
P.~Indyk and R.~Motwani.
\newblock Approximate nearest neighbors: Towards removing the curse of
  dimensionality.
\newblock In {\em Proc. 30th Symp. Theory of Computing (STOC)}, pages 604--613,
  1998.

\bibitem{mdl-frechet-2014}
R.~J. Kenefic.
\newblock Track clustering using {F}r\'echet distance and minimum description
  length.
\newblock {\em J. of Aerospace Information Systems}, 11(8):512--524, 2014.

\bibitem{Keogh05}
E.~Keogh and C.~A. Ratanamahatana.
\newblock Exact indexing of dynamic time warping.
\newblock {\em Knowledge and Information Systems}, 7(3):358--386, 2005.

\bibitem{888711}
Z.~Kovacs-Vajna.
\newblock A fingerprint verification system based on triangular matching and
  dynamic time warping.
\newblock {\em IEEE Transactions on Pattern Analysis and Machine Intelligence},
  22(11):1266--1276, 2000.

\bibitem{Legrand2008215}
B.~Legrand, C.~Chang, S.~Ong, S.-Y. Neo, and N.~Palanisamy.
\newblock Chromosome classification using dynamic time warping.
\newblock {\em Pattern Recognition Letters}, 29(3):215 -- 222, 2008.

\bibitem{Lv07}
Q.~Lv, W.~Josephson, Z.~Wang, M.~Charikar, and K.~Li.
\newblock Multi-probe lsh: Efficient indexing for high-dimensional similarity
  search.
\newblock In {\em Proc. 33rd Int. Conf. on Very Large Data Bases}, VLDB '07,
  pages 950--961. VLDB Endowment, 2007.

\bibitem{mat-embeddings}
J.~Matou\v{s}ek.
\newblock Embedding finite metric spaces into euclidean spaces.
\newblock In {\em Lectures on Discrete Geometry}, chapter~15. Springer, 2002.

\bibitem{rakthanmanon2012searching}
T.~Rakthanmanon, B.~Campana, A.~Mueen, G.~Batista, B.~Westover, Q.~Zhu,
  J.~Zakaria, and E.~Keogh.
\newblock Searching and mining trillions of time series subsequences under
  dynamic time warping.
\newblock In {\em Proc. 18th Conf. Knowl. Disc. and Data Mining}, pages
  262--270, 2012.

\bibitem{RatanamahatanaK05}
C.~A. Ratanamahatana and E.~J. Keogh.
\newblock Three myths about dynamic time warping data mining.
\newblock In {\em Proc. SIAM Conf. on Data Mining (SDM)}, pages 506--510, 2005.

\bibitem{nn-book}
G.~Shakhnarovich, T.~Darrell, and P.~Indyk, editors.
\newblock {\em Nearest-Neighbor Methods in Learning and Vision: Theory and
  Practice}.
\newblock MIT Press, 2006.

\bibitem{Shrivastava14}
A.~Shrivastava and P.~Li.
\newblock Asymmetric {LSH (ALSH)} for sublinear time maximum inner product
  search ({MIPS}).
\newblock In {\em Proc. 27th Conf. on Neural Information Processing Systems
  (NIPS)}, pages 2321--2329, 2014.

\bibitem{vries2012kernel}
G.~Vries.
\newblock {\em Kernel methods for vessel trajectories}.
\newblock PhD thesis, Univ. Amsterdam, 2012.

\bibitem{trajectory-corridors2010}
H.~Zhu, J.~Luo, H.~Yin, X.~Zhou, J.~Huang, and F.~B. Zhan.
\newblock Mining trajectory corridors using {F}r{\'{e}}chet distance and
  meshing grids.
\newblock In {\em Proc. 14th Pacific-Asia Conference on Knowledge Discovery and
  Data Mining (PAKDD)}, pages 228--237, 2010.

\end{thebibliography}

\appendix
\section{Conditional lower bound}
\applab{clb}
There is a trivial reduction from the orthogonal vectors problem to the problem
of finding a close pair under the $\ell_{\infty}$ distance up to approximation $c<3$~\cite{Indyk01}. Since
$\ell_{\infty}$ embeds isometrically into  the \Frechet distance (while
preserving the dimension up to a constant factor), this implies a conditional
time lower bound for the near-neighbor problem under the \Frechet distance. 
In detail, the orthogonal vectors problem can be stated as follows. Given two sets of
vectors $A,B \subset \{0,1\}^d$ with $|A|=|B|=n$, does there exist a pair 
$a \in A$ and $b \in B$ such that $a$ and $b$ are orthogonal?
The orthogonal vectors conjecture, which can be related to the strong exponential
time hypothesis, states that for no $\eps>0$, there exists an algorithm for the
 orthogonal vectors problem that runs in time $O\pth{n^{2-\eps}\poly{d}}$ if $d>\log^2 n$.
This conjecture implies that there exists no data structure for exact near-neighbor
searching under the discrete \Frechet distance that achieves
both preprocessing time in $O\pth{n^{2-\eps}\poly{m}}$ and query time in
$O\pth{n^{1-\eps} \poly{m}}$ for any $\eps>0$. Indeed, if such a data structure would exist,
then we could solve an instance of orthogonal vectors by storing $A$ in 
this data structure and performing a query with each of the vectors of $B$.

\end{document}